\documentclass[11pt]{article}
\usepackage{graphics}
\usepackage{authblk}
\usepackage[dvips]{graphicx}
\usepackage[english]{babel}
\usepackage{amsthm}
\newtheorem{theorem}{Theorem}[section]

\usepackage[utf8]{inputenc}
\usepackage{amsmath,amsthm,amssymb}
\usepackage{mathrsfs}
\usepackage[mathscr]{euscript}
\usepackage{graphicx}
 \usepackage{epstopdf}
\usepackage{float}
\usepackage{xcolor}
\usepackage{ulem}
\usepackage{cancel}
\usepackage{wrapfig}
\definecolor{uofsgreen}{rgb}{0.0, 0.8, 0.6}
\definecolor{brickred}{rgb}{0.8, 0.25, 0.33}

\usepackage{subcaption}
\usepackage{graphicx}

\newcommand{\vect}[1]{{\bf {#1} }}

\makeatletter
\numberwithin{equation}{section}
\newcommand{\leastspacing}{\let\CS=\@currsize\renewcommand{\baselinestretch}{.6}\tiny\CS}
\newcommand{\singlespacing}{\let\CS=\@currsize\renewcommand{\baselinestretch}{1}\tiny\CS}
\newcommand{\oneandahalfspacing}{\let\CS=\@currsize\renewcommand{\baselinestretch}{1.4}\tiny\CS}
\newcommand{\onespacing}{\let\CS=\@currsize\renewcommand{\baselinestretch}{1.2}\tiny\CS}
\newcommand{\oneh}{\let\CS=\@currsize\renewcommand{\baselinestretch}{1.35}\tiny\CS}
\newcommand{\doublespacing}{\let\CS=\@currsize\renewcommand{\baselinestretch}{1.7}\tiny\CS}
\oneh
\newcommand{\be}{\begin{equation}}
\newcommand{\ee}{\end{equation}}
\newcommand{\beano}{\begin{eqnarray*}}
\newcommand{\eeano}{\end{eqnarray*}}
\newcommand{\ba}{\begin{eqnarray}}
\newcommand{\ea}{\end{eqnarray}}
\newcommand{\no}{\nonumber}

\newcommand{\hone}{\mbox{\hspace{1em}}}

\def\pr{\partial}

\begin{document}
\bibliographystyle{unsrt}
%\singlespacing
%\doublespacing
\title{A new model for two-layer liquid-gas stratified flows in pipes with general cross sections}

\author[1]{Sarswati Shah%
	\thanks{Email: \texttt{sshah66@gmu.edu}}}
\author[2]{Gerardo Hern\'andez-Due\~nas\thanks{Email: \texttt{hernandez@im.unam.mx}}}
\affil[1]{{Center for Mathematics and Artificial Intelligence
and Department of Mathematical Sciences, George Mason University, Fairfax, VA 22030, USA}}
\affil[2]{{Institute of Mathematics, National University of Mexico, Blvd. Juriquilla 3001, Queretaro, Mexico}}
\date{}

%Two-layer stratified fluid flows
%\author[1]{Sarswati Shah%
	%\thanks{Email: \texttt{sarwatishah@im.unam.mx}}}
%\author[1]{Gerardo Hern\'andez-Due\~nas\thanks{Email: \texttt{hernandez@im.unam.mx}}}
%\affil[1]{\textit{Institute of Mathematics, National University of Mexico, Blvd. Juriquilla 3001, Queretaro, Mexico}}
%\date{}
\maketitle

\begin{abstract}
In this work, we derive a new model for immiscible two-layer gas-liquid stratified flows in pipes with general cross sections. The bottom layer is occupied by an incompressible fluid in liquid phase with hydrodynamics based on a hydrostatic pressure, following a shallow water approximation. The top layer is occupied by a compressible gas, following an ideal gas law leading to conservation of mass, momentum and energy. The two subsystems are linked through non-conservative products, representing momentum and energy exchanges between layers. The hyperbolic properties of the resulting model are analyzed, including the derivation of entropy inequalities, and the approximations of eigenvalues of the corresponding coefficient matrix. Numerical tests are included to demonstrate the merits of the model and the numerical approximations, including well-balancedness, Riemann problems, and perturbations and convergence toward steady states at rest. Besides simulations of water and air where the density difference between layers is significant, a case where such difference is not so pronounced (like gas and liquid hydrogen) is also shown. 
\end{abstract}

{\bf \it{Keywords}: Hyperbolic balance laws; Two-layer flows in pipes; Ducts with general cross sections; Non-conservative system; Shallow water approximations; Euler equations.}

\section{Introduction}

The handling of gas–liquid mixtures using centrifugal pumps has grown significantly, particularly in hydraulic, nuclear, and chemical engineering applications \cite{chaudhry1990analysis}. However, the complexities of gas–liquid flows \cite{poullikkas2003effects} within pump impellers make it difficult to estimate fluctuations in pump performance. The presence of both gas and liquid phases in these pipelines can lead to pressure surges and the formation of moving air pockets during the flow evolution. Formulating an accurate mathematical model to discuss these phenomena is a complex task, as they may exhibit different regimes in the case of multi-phase flows (see \cite{pai2013two}). This may happen, for instance, in a pressurized flow where the pipe is full of liquid (slug flow). Additionally, stratified flow may occur, where the interface between the fluids can be either wavy or flat, depending on factors like the gas flow rate, among others. Variations in cross-sectional area induce strong accelerations and decelerations of the mixture and modify the local pressure distribution, which can in turn promote transitions between different flow regimes and significantly influence the onset of pressure surges and air-pocket dynamics. In this context, numerical and experimental analysis are presented in \cite{besharat2016study} to control such pressure surges in water networks. See also \cite{banda2006coupling}, where gas flow is modeled in pipeline networks using the isothermal Euler equations, and \cite{brouwer2011gas} where a hierarchy of models is presented, considering different parameter regimes.  

These physical and geometric complexities described above motivate us to develop a mathematical model that can account for variations in cross-sectional area and compressibility effects. More specifically, this manuscript focuses on modeling two-phase flow in pipes with varying cross-sections. We adopt a combined methodology to analyze the dynamics of two-layer stratified flows in pipes. The bottom layer is assumed to be occupied by an incompressible fluid in liquid phase, driven by a hydrostatic pressure, with a time evolution that follows a shallow-water approximation. The top layer is assumed to be occupied by a compressible gas, evolving according to an ideal gas law and an equation of state used for the Euler equations. The two sub-systems are connected through non-conservative products that result from momentum and energy exchanges between the two layers. 

Two-layer liquid-gas stratified flows in ducts have been studied in previous works. For instance, a model for such gas-liquid interactions was derived in \cite{demay2017compressible} for channels with rectangular or circular cross sections, with compressible fluids in both layers. A key distinction of our work here from the above reference lies in our assumption that the bottom layer is completely incompressible while the top one is fully compressible. In \cite{bourdarias2013air}, a similar approach to ours was followed, computing averages and considering interactions between incompressible and compressible fluids. However, they consider an isentropic flow under an isothermal process while we use a different equation of state for ideal fluids. On the other hand, the internal geometry of the conduit is rarely uniform in practical pipeline engineering applications. Pipes may gradually contract or expand, or may include localized features such as obstacles and valves \cite{capart1997numerical}.  This work is novel in the current literature and can be applied to pipes of any shape, whether shrinking or expanding, and regardless of the cross sectional geometry.

Addressing two-phase flow in this setting involves several theoretical and numerical challenges. The interaction between the liquid and gas layers generates additional terms that cannot be written purely in conservation form, but instead appear as non-conservative products in the governing equations. Consequently, the system falls outside the classical framework of conservation laws, and the notion of weak solution must be treated with care. Typical attempts include prescribing suitable paths \cite{dal1995definition} in state space across discontinuities. The choice of path is pivotal and, ideally, should be informed by the underlying physics. For layer-averaged shallow-water models, no canonical prescription exists; distinct admissible paths may yield different weak solutions, introducing a degree of arbitrariness. In multiphase contexts, for example, \cite{raviart1995nonconservative} defines the path via the traveling-wave profile of a viscously regularized system, whereas \cite{toumi1992weak} adopts the Volpert (straight-line) path. Entropy inequalities may be helpful in selecting physically meaningful solutions. In this spirit, we derive an entropy pair for the proposed model and prove the associated entropy inequality, thereby providing a robustness property for the formulation.

From a numerical perspective, the presence of non-conservative terms complicates the construction of robust finite-difference and finite-volume schemes. In this setting, path-conservative methods were introduced in \cite{pares2006numerical}, building on the theoretical framework of \cite{dal1995definition}. However, it was shown in \cite{abgrall2010comment} that such schemes may fail, in general, to compute the correct solutions, even when the appropriate path is assumed to be known. Further developments include entropy-stable, path-consistent schemes \cite{castro2013entropy} or relaxation approaches for coupled systems \cite{kolbe2024numerical,abgrall2009two}. Nevertheless, a general uniqueness theory remains an open problem \cite{bianchini2005vanishing}. Instead of adopting a path-conservative discretization, we focus on the implementation of a central-upwind scheme designed to preserve key structural properties, namely hyperbolicity, well-balancing, and positivity of the liquid-layer depth \cite{kurganov2009central, hernandez2021central}.

The rest of the manuscript is organized as follows. Section \ref{sec:TheModel} describes the model formulation and its derivation through cross-sectional averaging together with the corresponding assumptions and boundary conditions. It explains the model hyperbolic properties, including the existence of an entropy pair.  Section \ref{sec:NumResults} includes a variety of numerical tests that exhibit the robustness of the model and the numerical algorithm.

\section{Model formulation and its properties}

\label{sec:TheModel}

\subsection{Basic equations}
\label{sec:BasicEqns}

In order to derive our model, we start with the following three-dimensional set of equations for conservation of mass and balance of momentum. That is taken into consideration for both layers \cite{saurel2018diffuse, lemartelot2013liquid, demay2017compressible, demay2019splitting}. In addition, the top layer will follow an equation of state for ideal gases. As a result, we require an equation for conservation of energy to govern the upper layer. We then have
\begin{subequations} \label{Euler}
\begin{align}
\hspace{-0.35cm}\frac{\pr \rho_k}{\pr t} + \nabla \cdot (\rho_k \textbf{u}_k) = 0 \hone \mbox{(Mass conservation for $k=1,2$)}, \label{1a}\\
\hspace{-0.35cm}\frac{\pr \rho_k \textbf{u}_k}{\pr t} + \nabla \cdot (\rho_k \textbf{u}_k \otimes \textbf{u}_k + \mathcal{P}_k I) =  \rho_k \mbox{\textbf{g}} \hone \mbox{(Momentum cons., $k=1,2$)}, \label{1b}\\
\hspace{-0.35cm}\frac{\pr \rho_2 \mathcal{E}_2}{\pr t} + \nabla \cdot \left((\rho_2 \mathcal{E}_2 + \mathcal{P}_2)\textbf{u}_2 \right) =0 \hone \mbox{(Energy conservation for $k=2$)}, \label{1c}
\end{align}
\end{subequations}
where the indices $k=1,2$ denote the lower layer and upper layer, respectively. The velocity vector of each phase is given by the variable $\textbf{u}_k = (u_k, v_k, w_k)$, the density by $\rho_k$, $\textbf{g}$ is the acceleration due to gravity, $\mathcal P_k$ the pressure, $ \mathcal{E}_2 = e_2  + \frac{1}{2} {u_2}^2$ is the gas energy density, with $e_2$ being the internal gas energy density.

\begin{figure}[htp]
\centering
{\includegraphics[width=.55\textwidth]{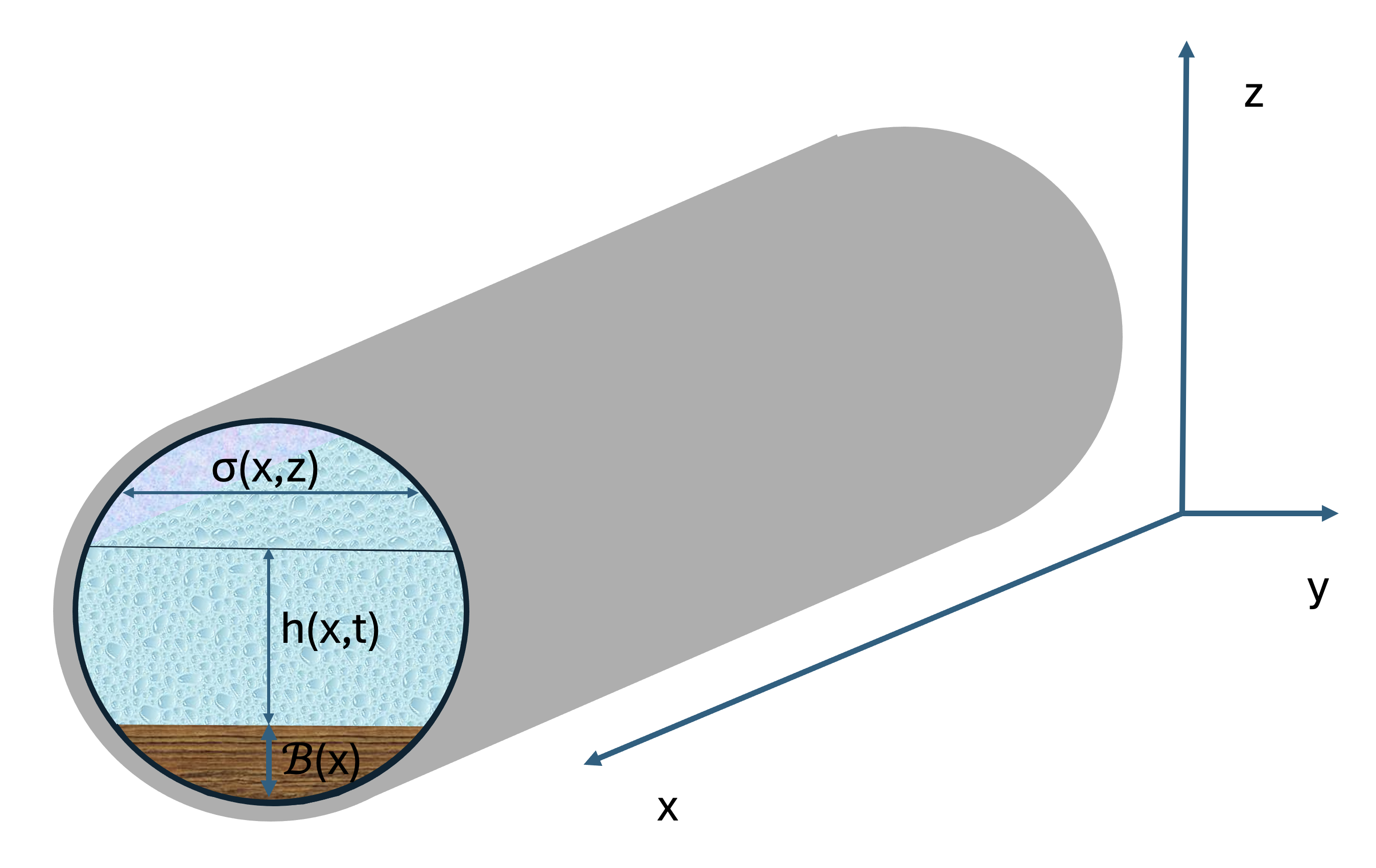}}
{\includegraphics[width=.44\textwidth]{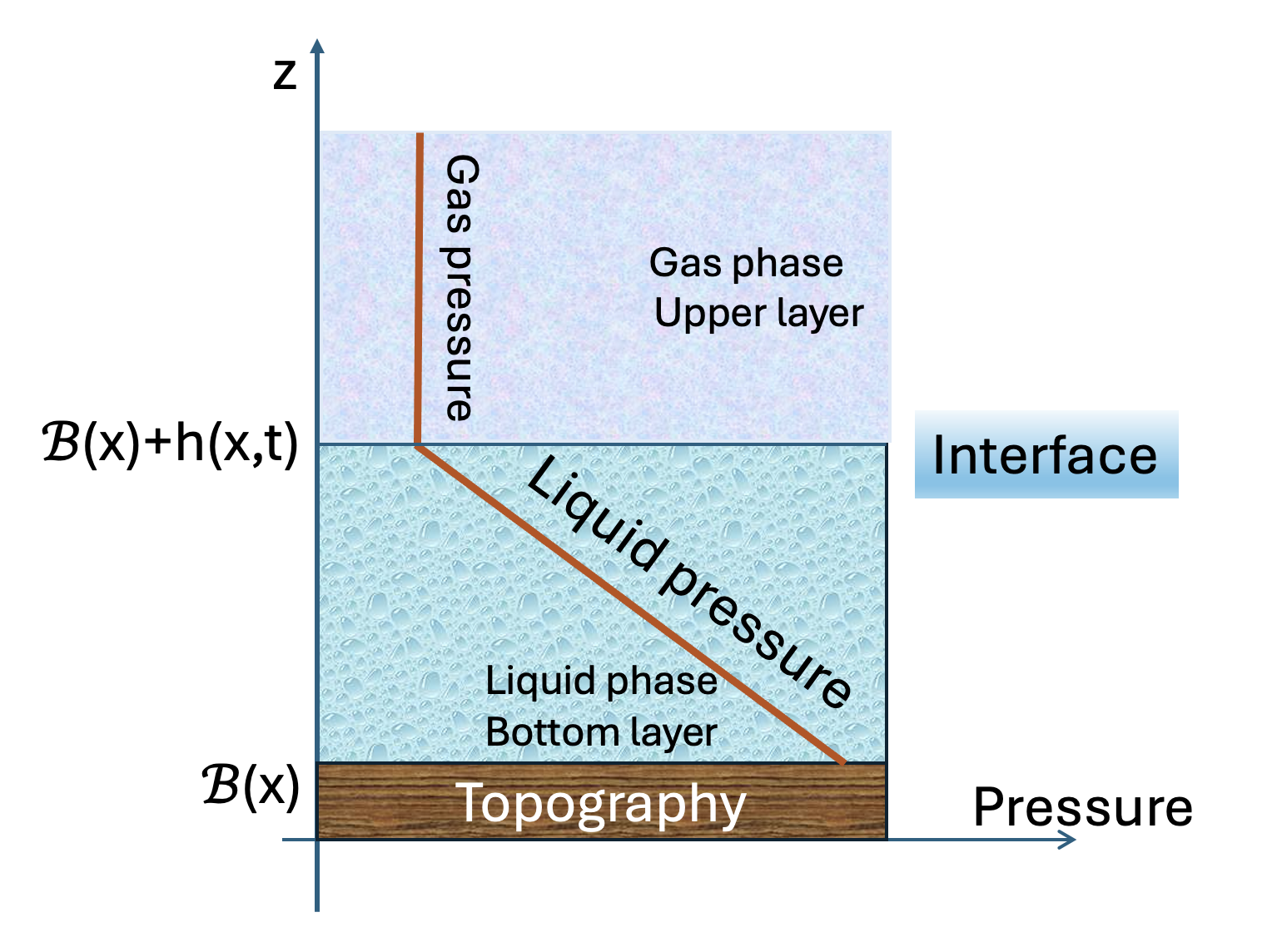}}
\caption{Schematic. Left panel: 3D schematic of the duct's geometry. Right panel: pressure as a function of height. }
\label{fig:schematic}
\end{figure}

\paragraph{Duct's geometry}
In our model (to be derived in the next section), we will assume that the bottom and upper layers are separated by an interface. The model computes the time evolution of the cross-sectional averages of the corresponding conserved variables (such as momentum and energy). Such averaging process involves integrating the equations of motion over the cross sectional region that is occupied by each layer. The averaging process eliminates information about variations in the cross-sectional directions, but it can still provide accurate predictions of the fluid’s propagation along the duct’s axial direction. We thus assume that the interface that separates both layers is given at height 
\[
z = \mathcal{B}(x)+h(x,t),
\]
where $x$ denotes the duct's axial position, $\mathcal{B}(x)$ the bottom topography and $h(x,t)$ the depth of the bottom layer at time $t$ associated with the cross section at axial position $x$. See the left panel of Figure \ref{fig:schematic} for a schematic of the duct's geometry. The cross section at axial position $x$ can have an arbitrary geometry, given by the region
\[
0 \le z \le H(x), \; -\sigma(x,z) \le y \le \sigma(x,z)/2,
\]
where $\sigma(x,z)$ is the width of the channel at axial position $x$ and height $z$; and $H(x)$ is the maximum channel's height at axial position $x$. 

\paragraph{Pressure}
The fluid in the upper layer is assumed to behave as an ideal compressible gas. The pressure $\mathcal{P}_2$ is assumed to stay pretty uniform in that layer, given ultimately by a function of density $\rho_2$ and internal energy density $e_2$ through an equation of state.  That is, in the limit $\mathcal{P}_2$ varies only in $x$ and $t$ (independent of $y$).

On the contrary, the liquid in the bottom layer is assumed to be governed as an incompressible stratified flow. We adopt the classical hydrostatic assumption for pressure, giving
\ba
\frac{d \mathcal{P}_1}{dz} = -g \rho_1,
\label{hydros}
\ea
where the density $\rho_1$ is assumed to be constant both in space and time. We also impose the following interfacial condition for pressure:
\ba
\mathcal{P}_1|_{z = \mathcal{B}+h} = \mathcal{P}_2. \no
\ea
That is, the pressure is continuous accross the interface. Integrating  \eqref{hydros}, we get
\ba
\label{eq:HydrostaticPressure}
\mathcal{P}_1 = {\mathcal{P}_2} + g \rho_1 ( \mathcal{B}+h - z). \label{HP}
\ea
This pressure as a function of height is depicted in the right panel of Figure \ref{fig:schematic}.

\subsection{Saint-Venant averaging}
Pursuing the Saint-Venant modelling approach \cite{audusse2005multilayer}, we assume that the variations in each cross section is weak, dominated by the motion in the duct's axial direction. Any quantity $\mathcal{F}$ in the model is then averaged in each cross section as
\begin{equation}
\label{depth}
\begin{array}{lcl}
\overline{\mathcal{F}_1(x,t)} & = & \frac{1}{A_1} \int_{\mathcal{B}(x)}^{\mathcal{B}(x)+h(x,t)}\int_{-\sigma/2}^{\sigma/2} \mathcal{F}(x,y,z,t) dydz, \\  \\\overline{\mathcal{F}_2(x,t)} & = & \frac{1}{A_2} \int_{\mathcal{B}(x)+h(x,t)}^{H(x)}\int_{-\sigma/2}^{\sigma/2} \mathcal{F}(x,y,z,t) dydz.
\end{array}
\end{equation}
The term $A_k(x,t)$ denotes the cross-sectional area at axial position $x$ and time $t$ at layer $k$, and it is given by 
\ba
A_1(x,t) = \int_{\mathcal{B}(x)}^{\mathcal{B}(x)+h(x,t)} \sigma(x,z)dz, A_2(x,t) = \int_{\mathcal{B}(x)+h(x,t)}^{H(x)} \sigma(x,z)dz.  \label{VCS}
\ea
We also introduce the density weighted averaging operator as in \cite{demay2017compressible}, giving
\ba
\widehat{\mathcal{F}_k(x,t)} = \frac{\overline{\rho_k \mathcal{F}}}{\overline{\rho_k }}.
\ea
Integrating the continuity equation, we get (for any $z_1$ and $z_2$)
\ba
\int_{z_1}^{z_2}\int_{-\sigma/2}^{\sigma/2} \frac{\pr \rho_k}{\pr t} dy dz + \int_{z_1}^{z_2} \int_{-\sigma/2}^{\sigma/2} \frac{\pr \rho_k u_k}{\pr x} dy dz + \int_{z_1}^{z_2} \int_{-\sigma/2}^{\sigma/2} \frac{\pr \rho_k  v_k}{\pr y} dy dz \no\\
 + \int_{z_1}^{z_2} \int_{-\sigma/2}^{\sigma/2} \frac{\pr \rho_k  w_k}{\pr z} dy dz = 0. \label{MA} 
\ea
For the lower layer, $z_1=\mathcal{B}(x)$ and $z_2=\mathcal{B}(x)+h(x,t)$. Using the fundamental theorem of calculus and the Reynolds transport theorem, the  averaged continuity equation for the lower layer (i.e. $k=1$) becomes
\ba
\label{eq:AvgMass}
\frac{\pr}{\pr t}(A_1 \overline{\rho_1}) + \frac{\pr}{\pr x}(A_1 \overline{\rho_1} \; \widehat{u_1}) - \int_{-\sigma/2}^{\sigma/2} \rho_1 u_1 dy\bigg|_{z=\mathcal{B}+h} \frac{\pr}{\pr x}(\mathcal{B}+h) + \int_{-\sigma/2}^{\sigma/2} \rho_1 u_1 dy\bigg|_{z=\mathcal{B}} \frac{\pr \mathcal{B}}{\pr x}\no\\
-\int_{\mathcal{B}}^{\mathcal{B} + h}\rho_1 u_1\bigg|_{y=\sigma/2}\frac{\pr}{\pr x}\left(\frac{\sigma}{2} \right)dz + \int_{\mathcal{B}}^{\mathcal{B} + h}\rho_1 u_1\bigg|_{y=-\sigma/2}\frac{\pr}{\pr x}\left(\frac{-\sigma}{2} \right)dz + \int_{\mathcal{B}}^{\mathcal{B} + h}\rho_1 v_1\bigg|_{y=\sigma/2}dz \no\\ 
- \int_{\mathcal{B}}^{\mathcal{B} + h}\rho_1 v_1\bigg|_{y=-\sigma/2} dz + \int_{-\sigma/2}^{\sigma/2} \rho_1 w_1 dy\bigg|_{\mathcal{B}}^{\mathcal{B}+h} - \int_{\mathcal{B}}^{\mathcal{B} + h}\rho_1 w_1\bigg|_{y=\sigma/2} \frac{\pr}{\pr z}\left(\frac{\sigma}{2} \right)dz \no\\ 
+ \int_{\mathcal{B}}^{\mathcal{B} + h}\rho_1 w_1\bigg|_{y=-\sigma/2} \frac{\pr}{\pr z}\left(\frac{-\sigma}{2} \right)dz = 0.\hspace{0.31cm}
\ea

\paragraph{Streamline\;conditions}

In equation \eqref{eq:AvgMass}, the conserved variable is now $A_1 \overline{\rho_1}$, with flux $A_1 \overline{\rho_1} \; \widehat{u_1}$. There are additional boundary terms that arise during the averaging process. However, those terms disappear if the interface and duct's walls are streamlines. These boundary conditions imply, for instance, that fluid particles located at the interface remain there. As a result, the interface offers no resistance to the flow. Specifically, it can be stated as follows. The interface is parametrized by $\vec{r}(x,y,t) = (x,y, \mathcal{B}(x)+h(x,t) )$. A vector normal to the surface satisfies $ \vec{n} \sim \vec{r}_x \times \vec{r}_y =  \left(-\frac{\pr (h +\mathcal{B})}{\pr x},0,1 \right) $. The relative vertical velocity of the fluid is $w-\partial_t h$, where $\partial_t h$ is the surface time variation. Its normal component to the surface must vanish. So,
\ba
0 = (u,v,w-\partial_t h)\cdot \left(-\frac{\pr (\mathcal{B}+h)}{\pr x},0,1 \right) \big |_{z = \mathcal{B}+h}, 
\ea
which gives
\ba
w \big |_{z=\mathcal{B}+h} = \partial_t h + u \big |_{z = \mathcal{B}+h} \frac{\pr (h +\mathcal{B})}{\pr x}. \label{SC1}
\ea

By symmetry, we assume that the walls are parametrized by $(x,z) \longrightarrow (x,y=\pm \sigma/2,z)$, with normal vectors $\vec{n} \sim  \left(\pm \frac{\sigma_x}{2}, -1, \pm \frac{\sigma_z}{2} \right)$. By following the same streamline conditions over the walls, we get to
\ba
\label{SC2}
\begin{array}{lcl}
v_1|_{y=\frac{\sigma}{2}} & = & u_1 \frac{\pr}{\pr x} \left(\frac{\sigma}{2} \right) + w_1 \frac{\pr}{\pr z} \left(\frac{\sigma}{2} \right)\bigg|_{y=\frac{\sigma}{2}}, \\
v_1|_{y=\frac{-\sigma}{2}} & = & u_1 \frac{\pr}{\pr x} \left(-\frac{\sigma}{2} \right) + w_1 \frac{\pr}{\pr z} \left(-\frac{\sigma}{2} \right)\bigg|_{y=\frac{-\sigma}{2}}.
\end{array}
\ea

\paragraph{Averaging process}
Using the streamline conditions (\ref{SC1}, \ref{SC2}), we get the final averaged equation for conservation of mass at the bottom layer:
\ba
\boxed{\frac{\pr}{\pr t}(A_1 \overline{\rho_1}) + \frac{\pr}{\pr x}(A_1 \overline{\rho_1}\; \widehat{u_1})=0}. \label{MAFL}
\ea
As it was mentioned above, the fluid of the bottom layer is assumed to be incompressible with $\overline{\rho_1}$ constant both in space and time.

The equation for conservation of mass is obtained analogously, giving 
\ba
\boxed{\frac{\pr}{\pr t}(A_2 \overline{\rho_2}) +  \frac{\pr}{\pr x}(A_2 \overline{\rho_2}\; \widehat{u_2})= 0}. \label{MAF}
\ea

\vspace{0.2cm}
\noindent
The integration of equation \eqref{1b} over each cross section and layer, together with the use of the Reynolds theorem, leads to
\ba
\frac{\pr}{\pr t} (A_1 \overline{\rho_1}\widehat{u_1}) + \frac{\pr}{\pr x}\left(A_1 (\overline{\rho_1}\widehat{u_1^2} + \overline{\mathcal{P}_1})\right) - \int_{-\sigma/2}^{\sigma/2}\mathcal{P}_1 dy \bigg|_{z=\mathcal{B}+h}  \frac{\pr}{\pr x}(\mathcal{B}+h) + \int_{-\sigma/2}^{\sigma/2}\mathcal{P}_1 dy \bigg|_{z=\mathcal{B}}  \frac{\pr \mathcal{B}}{\pr x}\no\\
- \int_{\mathcal{B}}^{\mathcal{B}+h} \mathcal{P}_1|_{y=\sigma/2}\frac{\pr}{\pr x}\left(\frac{\sigma}{2} \right)dz + \int_{\mathcal{B}}^{\mathcal{B}+h} \mathcal{P}_1|_{y=-\sigma/2}\frac{\pr}{\pr x}\left(\frac{-\sigma}{2} \right)dz =  0. \hspace{0.8cm}\label{momentum} 
\ea
\ba
\frac{\pr}{\pr t}(A_2 \overline{\rho_2}\widehat{u_2}) + \frac{\pr}{\pr x}\left(A_2 (\overline{\rho_2} \widehat{u_2^2} + \overline{\mathcal{P}_2}) \right) -  \mathcal{P}_2|_{z=H(x)} \sigma_2 \frac{\pr}{\pr x}(H(x))+ \mathcal{P}_2|_{z=\mathcal{B}+h} \sigma_1 \frac{\pr}{\pr x}(\mathcal{B}+h) \no\\
=   \int_{\mathcal{B}+h}^{H(x)} \mathcal{P}_{2_W}\frac{\pr \sigma}{\pr x}dz. \no
\ea

Integrating equation \eqref{1c} over the bottom layer of each cross section, and using the streamline conditions, we arrive at
\ba
0 = \frac{\pr}{\pr t} (A_2 \overline{\rho_2}\widehat{\mathcal{E}_2}) + \frac{\pr}{\pr x} \left(A_2 \overline{\rho_2}\widehat{\mathcal{E}_2} + A_2 \overline{\mathcal{P}_2 u_2} \right) - \mathcal{P}_2|_{z = \mathcal{B}+h} \sigma \pr_t h.
\ea
Using equation \eqref{MAFL}, and the fact that $\overline{\rho_1}$ is constant both in space and time, the above equation can be written as
\ba
\boxed{\frac{\pr}{\pr t} (A_2 \overline{\rho_2}\widehat{\mathcal{E}_2}) + \frac{\pr}{\pr x} \left(A_2 \overline{\rho_2} \widehat{\mathcal{E}_2} + A_2 \overline{\mathcal{P}_2 u_2}) \right) + \mathcal{P}_2|_{z = \mathcal{B}+h} \pr_x ( A_1 \hat u_1) = 0. }  \label{energy}
\ea
Here, the last term makes this equation non conservative. We recall that the total energy density $\widehat{\mathcal{E}_2}$ of upper layer is the sum of internal energy density and kinetic energy density, which writes:
\ba
\widehat{\mathcal{E}_2} = \widehat{e_2}  + \frac{1}{2}\widehat{u_2}^2. \label{energy2}
\ea
The pressure in the upper layer is given by an equation of state for an ideal gas:
\ba\label{eq:AvgEoS}
\widehat{e_2} = \frac{\overline{\mathcal{P}_2}}{\overline{\rho_2}(\gamma_2 -1)}, \label{EoS}
\ea
where $\gamma_2$ is a ratio of gas constants.

%---------subsection------------
\subsection{The model}

In Section \ref{sec:BasicEqns}, we have discussed the basic equations of motion before the averaging process. We assumed an upper layer occupied by an ideal compressible gas. The averaged gas pressure is given in terms of the density and internal energy by equation \eqref{eq:AvgEoS}. We also assumed that the pressure varies in the axial direction but it only weakly varies in the cross-sectional direction, so that $\mathcal{P}_2 \approx \overline{\mathcal{P}_2}$. In contrast, the liquid in the bottom layer is assumed to be incompressible with a hydrostatic pressure given by equation \eqref{eq:HydrostaticPressure}. The pressure $\mathcal{P}_1$ increases with depth, and coincides with the upper pressure at the interface. This is depicted in the right hand side of Figure \ref{fig:schematic}.

With the above considerations, we close the reduced system with the following boundary evaluations of pressure:
\[
\begin{array}{llllll}
\mathcal{P}_2 \big |_{z = H(x)} =  \overline{\mathcal{P}}_2, & \mathcal{P}_2 \big |_{z= B+h} = \overline{\mathcal{P}}_2,\\
\mathcal{P}_1 \big |_{z = B+h} = \overline{\mathcal{P}}_2, & \mathcal{P}_1 \big |_{y = \pm \sigma/2} = \overline{\mathcal{P}}_2+g \rho_1 (h+B-z), & \mathcal{P}_1 \big |_{z = B} = \mathcal{P}_2 + g \rho_1 h. \; 
\end{array}
\] 
Furthermore, we assume $\widehat{u_k^2} \approx (\widehat{u}_k)^2$ and $\overline{\mathcal P_2 u_2 } \approx \overline{\mathcal P_2} \;\overline{u_2}$, which is valid for flows that move mainly in the axial direction. The average pressure for the liquid layer writes
\ba
\int_{\mathcal{B}}^{\mathcal{B}+h} \int_{-\sigma/2}^{\sigma/2} \mathcal{P}_1 dy dz = \int_{\mathcal{B}}^{\mathcal{B}+h} \int_{-\sigma/2}^{\sigma/2}\overline{\mathcal{P}_2} dy dz + \int_{\mathcal{B}}^{\mathcal{B}+h} \int_{-\sigma/2}^{\sigma/2}\rho_1 (h + \mathcal{B} - z) dydz. \no
\ea
Since $\overline{\mathcal{P}_2}$ is independent of $y$ and $z$, dividing by $A_1$ we obtain
\ba 
\label{HP1}
\overline{\mathcal{P}_1} = \overline{\mathcal{P}_2} + \frac{g}{A_1}\int_{\mathcal{B}}^{\mathcal{B}+h} \rho_1 (h + \mathcal{B} - z)\sigma(x,z) \; dz. \label{HP1}
\ea

We note that the total area occupied by both layers is known, giving
\ba
\label{eq:A1A2}
A_1 + A_2 = A_{\mathcal{T}}(x) = \int_{\mathcal{B}(x)}^{H(x)} \sigma(x,z) dz \hspace{0.21cm} \mbox{(fixed in time)}.
\ea

Omitting the bar and hat signs, and taking the limit $\widehat{u_k^2} = (\widehat{u}_k)^2$ and $\overline{\mathcal P_2 u_2} = \overline{\mathcal P_2} \overline{u_2}$, our model becomes
\begin{subequations}\label{FEF1}
\begin{align}
\frac{\pr}{\pr t}(A_1 \rho_1) + \frac{\pr}{\pr x}(A_1 \rho_1 u_1)= 0, \label{FEF1a}\\
\hspace{-0.6cm}  \frac{\pr}{\pr t} (A_1 \rho_1 u_1) + \frac{\pr}{\pr x}\left(A_1 ( \rho_1 u_1^2 + \mathcal{P}_1)\right) - \mathcal{P}_{2} \; \sigma_1(x,t)  \; \frac{\pr}{\pr x}(\mathcal{B}+h) + \sigma_\mathcal{B}(x) \; ( \mathcal{P}_2+ g \rho_1 h ) \;  \frac{\pr \mathcal{B}}{\pr x} \no\\
- \int_{\mathcal{B}}^{\mathcal{B}+h} \left( \mathcal{P}_2 + g \rho_1 (\mathcal{B}+h-z) \right) \frac{\pr \sigma(x,z)}{\pr x} dz =  0, \label{FEF1b}\\
\frac{\pr}{\pr t}(A_2 \rho_2) +  \frac{\pr}{\pr x}(A_2 \rho_2 u_2)= 0,\label{FEF1c}\\
\hspace{-0.6cm} \frac{\pr}{\pr t}(A_2 \rho_2 u_2) + \frac{\pr}{\pr x}\left(A_2 ( \rho_2 u_2^2 + \mathcal{P}_2) \right) - \mathcal{P}_{2} \; \sigma_2 \frac{\pr}{\pr x}(H(x)) + \mathcal{P}_{2} \; \sigma_1(x,t) \frac{\pr}{\pr x}(\mathcal{B}+h) \no\\
=  \mathcal{P}_{2} \; \int_{\mathcal{B}+h}^{H(x)} \frac{\pr \sigma(x,z)}{\pr x}dz, \label{FEF1d}\\
\frac{\pr}{\pr t}(A_2 \rho_2 \mathcal{E}_2) + \frac{\pr}{\pr x} \left(A_2 ( \rho_2 \mathcal{E}_2+ \mathcal{P}_2) u_2\right) + \mathcal{P}_2 \; \pr_x ( A_1  u_1) =0, \label{FEF1e}
\end{align}
\end{subequations}
where
\begin{subequations}
\begin{align}
\mathcal{E}_2 & = e_2  + \frac{1}{2}u_2^2, e_2 = \frac{\mathcal{P}_2}{\rho_2 (\gamma_2 -1)}, \\
\rho_1 & =  \text{constant both in space and time}, \\
\mathcal{P}_1 & =  \mathcal{P}_2 + \frac{g \rho_1}{A_1}\int_{\mathcal{B}}^{\mathcal{B}+h}  (h + \mathcal{B} - z)\sigma(x,z) \; dz, \\
\sigma_{\mathcal{B}} & = \sigma(x,z = \mathcal{B}(x)),  \; \sigma_1 = \sigma(x, z = \mathcal{B}+h), \; \sigma_2 = \sigma(x,z = H(x)), \\
A_1 + A_2 & = A_{\mathcal{T}}(x) = \int_{\mathcal{B}(x)}^{H(x)} \sigma(x,z) dz  \hspace{0.21cm} \mbox{(fixed in time)}.
\end{align}
\end{subequations}

Due to equation \eqref{eq:A1A2}, the upper area $A_2$ is then a function of $A_1$. Therefore, our system \eqref{FEF1} has 5 equations and 5 degrees of freedom, closing it. It is worth noting that there is no explicit equation of motion for the interface. Instead, it evolves according to conservation of mass. Equation  \eqref{FEF1a} determines the evolution of $A_1$, which in turn gives $h$ through equation \eqref{VCS}. Equation \eqref{FEF1c} determines the evolution of $\rho_2$ in time. Finally, equations \eqref{FEF1b}, \eqref{FEF1d} and \eqref{FEF1e} determines the time evolution of gas energy, liquid and gas momenta. 

An immediate consequence of the system's structure is that it admits steady states of rest, satisfying
\begin{equation}
\label{eq:SteadyStatesRest}
\mathcal B + h = \text{constant}, \; u_1 = u_2 = 0, \; \rho_2 \mathcal E_2 = \text{constant}. 
\end{equation}
This can be shown via simple manipulations of the momentum equations.

%-------subsection----
\subsection{Energy conservation and entropy inequality} 

System \eqref{FEF1} consists of a two-layer flow with conservation of mass and balance of momentum for both layers plus an energy conservation equation of the upper layer. The two sub-systems for each layer communicate with each other via non-conservative products that allow for momentum and energy exchanges between layers. The bottom layer assumes a shallow-water approximation with no energy equation, which can still be computed as a consequence of the first two. By manipulating the system, one can obtain an entropy pair and entropy inequality, as stated in the following theorem. 

\begin{theorem}
Let us consider system \eqref{FEF1}. Then the entropy $\eta$ and entropy flux $q$ given by
\ba \label{entropy}
\eta = \frac{1}{2}A_1 \rho _1 u_1^2 + g\rho_1 A_1(h + \mathcal{B}) - \rho_1 I_1 + A_2 {\rho_2}{\mathcal{E}_2}, \no\\
q = \frac{1}{2}A_1 \rho _1 u_1^3  + g\rho_1 A_1 u_1(h + \mathcal{B}) + A_2 ({\rho_2}{\mathcal{E}_2}+{\mathcal{P}_2}){u_2}  + \mathcal{P}_{2} A_1 {u_1},
\ea
where $I_1 =  g \int_{ \mathcal{B}}^{ \mathcal{B}+h} (h +  \mathcal{B} -z)\sigma dz$, define an entropy pair. That is, $\eta$ is a convex function of the conserved variables, and satisfy the following inequality
\ba \label{Energy}
\pr_t \eta + \pr_x q, \; \leq  \; 0 \text{ in the weak sense.}
\ea
\end{theorem}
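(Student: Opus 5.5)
The plan is to prove the two claims separately: first the entropy \emph{equality} $\pr_t\eta+\pr_x q=0$ for smooth solutions, obtained by an explicit energy computation, and then convexity of $\eta$. The inequality \eqref{Energy} for weak solutions is then obtained in the usual way, by defining admissible weak solutions as vanishing-viscosity limits (or equivalently by imposing \eqref{Energy} as the admissibility criterion across discontinuities). This step depends on a choice of path for the non-conservative products, as discussed in the introduction, and I will state it as an assumption rather than prove it.

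\emph{Step 1: simplify the liquid momentum equation.} Write $A_1\mathcal{P}_1=A_1\mathcal{P}_2+\rho_1 I_1$. Differentiate $I_1$ in $x$ by Leibniz' rule. The boundary term at $z=\mathcal{B}+h$ vanishes, the one at $z=\mathcal{B}$ gives $-g h\,\sigma_{\mathcal B}\,\pr_x\mathcal{B}$, and the interior term gives $g\int_{\mathcal B}^{\mathcal B+h}\big(\pr_x(h+\mathcal{B})\,\sigma+(h+\mathcal{B}-z)\pr_x\sigma\big)dz$. Likewise $\pr_x A_1=\sigma_1\pr_x(\mathcal{B}+h)-\sigma_{\mathcal B}\pr_x\mathcal{B}+\int_{\mathcal B}^{\mathcal B+h}\pr_x\sigma\,dz$. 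Substituting into \eqref{FEF1b}, all the wall terms ($\sigma_1$, $\sigma_{\mathcal B}$, $\int\pr_x\sigma$) should cancel, leaving
\[
\pr_t(A_1\rho_1u_1)+\pr_x(A_1\rho_1u_1^2)+A_1\pr_x\mathcal{P}_2+g\rho_1A_1\pr_x(h+\mathcal{B})=0 .
\]
This bookkeeping is the main obstacle, and I would carry it out term by term.

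\emph{Step 2: liquid energy balance.} Multiply the simplified equation by $u_1$ and use \eqref{FEF1a} (with $\rho_1$ constant). This gives
\[
\pr_t\big(\tfrac12A_1\rho_1u_1^2\big)+\pr_x\big(\tfrac12A_1\rho_1u_1^3\big)+u_1A_1\pr_x\mathcal{P}_2+g\rho_1A_1u_1\pr_x(h+\mathcal{B})=0 .
\]
For the potential part, note that $\pr_h I_1=gA_1$ and $\pr_h A_1=\sigma_1$, with $\mathcal{B}$ independent of $t$. Hence
\[
\pr_t\big(g\rho_1A_1(h+\mathcal{B})-\rho_1I_1\big)=g\rho_1(h+\mathcal{B})\pr_tA_1=-g\rho_1(h+\mathcal{B})\pr_x(A_1u_1).
\]
Adding this to the previous identity combines the two gravity terms into $\pr_x\big(g\rho_1A_1u_1(h+\mathcal{B})\big)$. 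Finally, adding the gas energy equation \eqref{FEF1e} gives $u_1A_1\pr_x\mathcal{P}_2+\mathcal{P}_2\pr_x(A_1u_1)=\pr_x(\mathcal{P}_2A_1u_1)$. The non-conservative products therefore cancel exactly and produce $q$. The gas momentum equation is not needed, since $A_2\rho_2\mathcal{E}_2$ is itself a conserved variable.

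\emph{Step 3: convexity.} Take as conserved variables $(A_1,\,m_1=A_1u_1,\,A_2\rho_2,\,A_2\rho_2u_2,\,E=A_2\rho_2\mathcal{E}_2)$, with $\rho_1$ a constant factor.
\begin{itemize}
\item The term $E$ is linear in the conserved variables.
\item The kinetic term $\rho_1 m_1^2/(2A_1)$ is convex for $A_1>0$.
\item The potential $\Phi(A_1)=g\rho_1A_1(h+\mathcal{B})-\rho_1I_1$ has $\Phi'(A_1)=g\rho_1(h+\mathcal{B})+g\rho_1A_1/\sigma_1-g\rho_1A_1/\sigma_1=g\rho_1(h+\mathcal{B})$. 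Hence $\Phi''=g\rho_1/\sigma_1>0$.
\end{itemize}
So $\eta$ is convex, strictly in the liquid variables and linearly in the gas variables.
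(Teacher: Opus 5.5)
Your Steps 1--3 are correct and match the computations in the paper's proof. The paper uses your simplified liquid momentum equation, though it does not display the wall-term cancellation you carry out. It uses the same identity $\partial_t(\rho_1 I_1)=g\rho_1A_1\partial_t h$ for the potential part, and the same pairing $u_1A_1\partial_x\mathcal{P}_2+\mathcal{P}_2\partial_x(A_1u_1)=\partial_x(\mathcal{P}_2A_1u_1)$. Its Hessian has quadratic form $\frac{g}{\rho_1\sigma_1}V_1^2+\frac{1}{\rho_1A_1}(V_2-u_1V_1)^2$, which is exactly your kinetic-plus-$\Phi''$ decomposition. What you have actually proved, however, is only convexity and the entropy \emph{equality} for smooth solutions.

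The gap is the inequality \eqref{Energy}, which is part of the statement and which you leave as an assumption. Imposing \eqref{Energy} as the admissibility criterion makes it true by definition rather than proving it. An unspecified appeal to vanishing-viscosity limits does not settle the sign either, since the entropy production depends on which regularization is chosen. The paper supplies this step. It adds $\nu_1\partial_x(A_1\rho_1\partial_x u_1)$ and $\nu_2\partial_x(A_2\rho_2\partial_x u_2)$ to the two momentum equations and leaves the mass and gas-energy equations unchanged. Rerunning your Step 2 then gives
\[
\partial_t\eta+\partial_x q=\nu_1u_1\partial_x(A_1\rho_1\partial_x u_1)=\partial_x\big(\nu_1A_1\rho_1u_1\partial_x u_1\big)-\nu_1A_1\rho_1(\partial_x u_1)^2 .
\]
There is no $\nu_2$ contribution, because $\eta$ depends on the gas variables only through $A_2\rho_2\mathcal{E}_2$, whose equation is untouched. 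Against a test function $\varphi\ge 0$ the last term is nonpositive, and the flux term tends to zero as $\nu_1\to 0$ under the usual uniform bounds. This gives \eqref{Energy} for the vanishing-viscosity limit.

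Your concern about paths does not block this step. At the viscous level the two non-conservative products already combine into the exact derivative $\partial_x(\mathcal{P}_2A_1u_1)$, so the entropy balance is in conservation form. It therefore passes to the limit without any choice of path, given the standard strong convergence of the viscous solutions.
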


\begin{proof}
A straightforward calculation shows that the Hessian matrix $H_\eta$ of the entropy $\eta$ is given by:
\[
H_\eta = 
\begin{pmatrix}
\frac{u_1^2}{\rho_1 A_1} + \frac{g}{\rho_1 \sigma_1} & \frac{-u_1}{\rho_1 A_1} & 0_{1\times 3}\\
\frac{-u_1}{\rho_1 A_1} & \frac{1}{\rho_1 A_1} & 0_{1\times 3}\\
0_{3\times 1} & 0_{3\times 1} & 0_{3\times 3}\\ 
\end{pmatrix}.
\]
This is a symmetric matrix with associated quadratic form $\frac{g}{\rho_1 \sigma_1} V_1^2+\frac{1}{\rho_1 A_1} (V_2-u_1 V_1)^2$. The entropy is then a convex function of the conserved variables.

As usual, we consider the viscous version of system \eqref{FEF1}, as it is done in \cite{castro2013entropy}. That is, we add terms $\nu_1 \partial_x ( A_1 \rho_1 \partial_x  u_1)$ and $\nu_2 \partial_x (  A_2 \rho_2 \partial_x u_2)$ to the momentum equations, with $\nu_1, \nu_2 > 0$. Straightforward calculations show that
\[
\begin{array}{lcl}
\partial_t \left( A_1 \rho_1 \frac{1}{2} u_1^2 \right) & = & u_1 \partial_t (A_1 \rho_1 u_1 )- \frac{1}{2}u_1^2 \partial_1 (A_1 \rho_1) \\
& = & - \partial_x \left( A_1 \rho_1 u_1 \frac{1}{2} u_1^2 \right) + u_1 \left[ -A_1 \partial_x \mathcal{P}_2 - g \rho_1 A_1 \partial_x (\mathcal{B}+h)  \right]   \\
&& + \nu_1 A_1 \rho_1 u_1 \partial_x^2 u_1,
\end{array}
\]
\[
\begin{array}{lcl}
\partial_t (A_1 g (\mathcal{B}+h)) & = & g(\mathcal{B}+h) \partial_t A_1 + A_1 g \partial_t h = - g(\mathcal{B}+h) \partial_x (A_1 u_1) + \frac{g A_1}{\sigma_1} \partial_t A_1 \\
& = & - g(\mathcal{B}+h) \partial_x (A_1 u_1) - \partial_t (\rho_1 I_1).
\end{array}
\]
Adding the previous two equations, we get the entropy contribution coming from the bottom layer:
\begin{equation}
\label{eq:EntropyLowerLayer}
\begin{array}{ll}
\partial_t \left[ A_1 \rho_1 \left( \frac{1}{2}u_1^2 + g (\mathcal{B}+h)\right) - \rho_1 I_1 \right] & +  \partial_x \left[ A_1 \rho_1 u_1 \left( \frac{1}{2}u_1^2 +g (\mathcal{B}+h) \right) \right]  \\
 &  = -  A_1 u_1 \partial_x \mathcal{P}_2  +  \nu_1 u_1 \partial_x ( A_1 \rho_1 \partial_x u_1 ).
\end{array}
\end{equation}
As we can see, there is a term $A_1 u_1 \partial_x \mathcal{P}_2 $ that is not in conservation form, exchanging energy with the upper layer.

In order to get the total contribution towards entropy, we add the energy equation of the upper layer in \eqref{FEF1e}. The two terms inside the time derivative add to the entropy $\eta$. The two flux terms add to $q$, except for the term $\mathcal{P}_2 A_1 u_1$. That terms is obtained by adding the two non-conservative products in equations \eqref{FEF1e} and \eqref{eq:EntropyLowerLayer}. Those terms allow for the exchange of energy between layers while still conserving the overall contributions. We then obtain
\[
\partial_t \eta + \partial_x q = \nu_1 u_1 \partial_x ( A_1 \rho_1 \partial_x u_1 ).
\]
Since $\nu_1 > 0$, integration by parts show that the right hand side terms are negative in the weak sense. 

\end{proof}

%---------subsection-----------
\subsection{Hyperbolicity}

Hyperbolic conservation or balance laws constitute an important class of partial differential equations with a wide range of applications that explain various phenomena in gas dynamics and other areas. Such equations are of the form
\begin{equation}
\label{eq:UFS}
\textbf{U}_t + (\textbf{F}(\textbf{U}))_x =  \widetilde{\textbf{S}}.
\end{equation}
In our case, the vector of conserved variables, fluxes and source terms are given by
\[
\textbf{U} = 
\begin{pmatrix}
A_1\rho_1 \\
A_1 \rho_1 u_1\\
A_2 \rho_2 \\
A_2 \rho_2 u_2 \\ 
A_2 \rho_2 \mathcal{E}_2
\end{pmatrix},
\textbf{F}(\textbf{U},x)= 
\begin{pmatrix}
A_1 \rho_1 u_1 \\
A_1 (\rho_1 u_1^2+ \mathcal{P}_1) \\
A_2 \rho_2 u_2 \\
A_2 (\rho_2 u_2^2+ \mathcal{P}_2) \\
A_2 (\rho_2 \mathcal{E}_2+\mathcal{P}_2) u_2
\end{pmatrix},
\]
and
\begin{eqnarray}
\bf{\tilde S} = \begin{pmatrix}
    0\\
   \mathcal{P}_2 {\frac{\pr A_1}{\pr x}} - \sigma_{\mathcal{B}}\pr_x \mathcal{B} g \rho_1 h + g \rho_1\int_{\mathcal{B}}^{\mathcal{B}+h} (h +  \mathcal{B} -z)\pr_x\sigma dz \\
   0\\
   \mathcal{P}_2 \partial_x A_2  \\
   -\mathcal{P}_2 {\frac{\pr (A_1 u_1)}{\pr x}}
\end{pmatrix} .\nonumber
\end{eqnarray}
We note that the flux vector depends not only on $\vect U$ but also on the axial position $x$, due to the dependence of the pressure on the duct’s topography and geometry. Furthermore, the vector of source terms has been rewritten in a way that will be more convenient for us in the next section. In order for the conservation or balance law to be classified as hyperbolic, one first needs to write it in quasilinear form as
\ba
\textbf{U}_t + \textbf{M} \textbf{U}_x = \textbf{S}, \label{quasi}
\ea
where the coefficient matrix $\textbf{M}$ is given by
\ba
\begin{pmatrix}
0 & 1 & 0 & 0 & 0\\
A_1\left(\frac{\mathcal{P}_{2}}{A_2 \rho_1}+\frac{g}{\sigma_1} \right) -u_1^2& 2 u_1 & (\gamma_2 -1) \frac{1}{2}\frac{A_1u_2^2}{A_2} & -(\gamma_2 -1)\frac{A_1u_2}{A_2}  & (\gamma_2 -1)\frac{A_1}{A_2} \\
0 & 0 & 0 & 1 & 0 \\
\frac{\mathcal{P}_{2}}{\rho_1} & 0 & \frac{(\gamma_2 -3)}{2}u_2^2 & -u_2(\gamma_2 -3) & (\gamma_2-1)\\
0 & \frac{\mathcal{P}_{2}}{\rho_1} & (\gamma_2 -1)\frac{u_2^3}{2}-\frac{(\rho_2 \mathcal{E}_2 + \mathcal{P}_2)u_2}{\rho_2} & \frac{\rho_2 \mathcal{E}_2 + \mathcal{P}_2}{\rho_2}-(\gamma_2 -1)u_2^2 & \gamma_2 u_2
\end{pmatrix}. \label{quasilinearform}\hspace{0.5cm}
\ea
Here, we have assumed that $A_2 > 0$ is positive. That is, the duct is not totally full of water. Non-zero components of the vector of source terms write
\ba
S_{2} = \frac{A_1 \mathcal{P}_2}{A_2}\partial_x A_T+ \frac{-g\rho_1 A_1}{\sigma_1} \left(\sigma_ \mathcal{B} \pr_x \mathcal{B} - \int_{ \mathcal{B}}^{ \mathcal{B}+h} \pr_x\sigma dz \right), \\
S_{4} = \mathcal{P}_{2} \sigma_2 \pr_x H - \mathcal{P}_{2}\left(\sigma_ \mathcal{B} \pr_x \mathcal{B} - \int_{ \mathcal{B}}^{H} \pr_x\sigma dz \right).
\ea
We note that such terms do not contain any non-conservative products. 

We are interested in analyzing the hyperbolicity of the system. That is, one needs to analyze whether the eigenvalues of the coefficient matrix are all real with a complete set of eigenvectors.  The characteristic polynomial of the matrix \eqref{quasilinearform} in terms of the characteristic variable $\lambda$ is written as
\[
p(\lambda) = -(\lambda - u_2) \left[ \left( \left( \lambda- u_1 \right)^2 - c_1^2 \right) \left( \left( \lambda-u_2 \right)^2 -c_2^2 \right) -  \epsilon \; c_2^2 \left( \lambda-u_2 \right)^2 \right],
\]
where 
\[
c_1 = \sqrt{\frac{g A_1}{\sigma_1}}, \; c_2 = \sqrt{\frac{\gamma_2 P_2}{\rho_2}}, \text{ and } \epsilon = \frac{A_1 \rho_2}{A_2 \rho_1} 
\]
are the speed of sound for the bottom shallow water layer, the speed of sound of the top layer (typical of the Euler equations in gas dynamics),  and the non-dimensional ratio of densities and cross-sectional areas, respectively. 

As one can see, $\lambda = u_2$ is an eigenvalue, and it corresponds to the contact discontinuity (degenerate field) in the case of the Euler equations.  On the contrary, in general we do not have the facility of analytical expressions for the other 4 eigenvalues because of the last term in the characteristic polynomial. We note that in many situations, $\epsilon << 1$ will be a very small parameter. For instance, the ratio of densities between water and air satisfies $\rho_2/\rho_1 \approx 10^{-3}$. The parameter $\epsilon$ could also be small if $A_1/A_2 $ is small when the duct is almost dry, mainly occupied by gas. Of course, the opposite behavior when the duct is almost full of water could make $\epsilon$ large. 

In the limit when $\epsilon \to 0$, the 4 eigenvalues converge to $u_1 \pm c_1, \; u_2 \pm c_2$. In that limit, the two sub-systems are decoupled, two of those eigenvalues are associated with genuinely non-linear fields of the shallow water approximation (bottom layer) and the other two are also associated with genuinely non-linear fields of the Euler approximation (upper layer). 

In practice, our numerical algorithm includes the numerical computations of the eigenvalues, all of them real in the parameter regimes considered in our numerical tests. Nevertheless, the next theorem provides a theoretical analysis of the four eigenvalues that satisfy
\begin{equation}
\label{eq:4thDegreePoly}
\left( \left( \lambda- u_1 \right)^2 - c_1^2 \right) \left( \left( \lambda-u_2 \right)^2 -c_2^2 \right) = \epsilon \;  c_2^2 \left( \lambda-u_2 \right)^2 .
\end{equation}

\begin{theorem}
\label{thm:Hyp}
Assume that $A_1, A_2 > 0$ and $P_2 > 0$ are all positive. Assume also that 
\begin{equation}
\label{eq:DistintEvalues}
(u_2-u_1)^2 \neq (c_2 \pm c_1)^2, (u_2-u_1)^2 \neq c_1^2,  \text{ and } (u_2-u_1)^2 \neq c_2^2.
\end{equation}
Then, there exists a critical value $\epsilon_\text{crit} > 0$ that depends on $u_1,c_1,u_2,c_2$ such that system \eqref{FEF1} is hyperbolic for all $0 < \epsilon < \epsilon_\text{crit}$. 
\end{theorem}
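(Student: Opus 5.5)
Our approach is a perturbation argument in $\epsilon$ for the quartic \eqref{eq:4thDegreePoly}, together with a check that $\lambda=u_2$ is a simple eigenvalue. Set
\[
Q_\epsilon(\lambda) = \left( (\lambda-u_1)^2-c_1^2\right)\left((\lambda-u_2)^2-c_2^2\right) - \epsilon\, c_2^2 (\lambda-u_2)^2 ,
\]
so that $p(\lambda) = -(\lambda-u_2)Q_\epsilon(\lambda)$. Since $A_1,A_2,P_2>0$ (and $\rho_2>0$), we have $c_1,c_2>0$. At $\epsilon=0$ the roots of $Q_0$ are $u_1\pm c_1$ and $u_2\pm c_2$.

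\textbf{Step 1: the unperturbed spectrum is simple.} We first check that the five numbers $u_2$, $u_1\pm c_1$, $u_2\pm c_2$ are pairwise distinct.
\begin{itemize}
\item $u_2\pm c_2$ differ from each other and from $u_2$ because $c_2>0$.
\item $u_1\pm c_1$ differ from each other because $c_1>0$.
\item $u_1\pm c_1\neq u_2$ is exactly the hypothesis $(u_2-u_1)^2\neq c_1^2$.
\item $u_1\pm c_1 \neq u_2\pm c_2$ (all four sign combinations) means $u_2-u_1 \neq \pm c_1 \mp' c_2$, which is exactly $(u_2-u_1)^2\neq (c_2\pm c_1)^2$.
\end{itemize}
The hypothesis $(u_2-u_1)^2\neq c_2^2$ will be used in Step 3. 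Thus $Q_0$ has four simple real roots, none equal to $u_2$.

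\textbf{Step 2: real roots persist for small $\epsilon$.} Let $\lambda_j^0$ be the four roots of $Q_0$, and take $\delta>0$ smaller than half the minimal gap between the five distinct values. Because each $\lambda_j^0$ is a simple root, $Q_0$ changes sign on $[\lambda_j^0-\delta,\lambda_j^0+\delta]$. Concretely, $Q_0(\lambda_j^0-\delta)$ and $Q_0(\lambda_j^0+\delta)$ are nonzero with opposite signs.

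The perturbation $\epsilon c_2^2(\lambda-u_2)^2$ is bounded on the compact union of these intervals, say by $\epsilon M$. Define
\[
\epsilon_{\rm crit} = \min_{j,\pm} \frac{|Q_0(\lambda_j^0\pm\delta)|}{M} > 0 ,
\]
which depends only on $u_1,c_1,u_2,c_2$. For $0<\epsilon<\epsilon_{\rm crit}$, $Q_\epsilon$ still changes sign on each of the four disjoint intervals. By the intermediate value theorem, $Q_\epsilon$ has a real root in each interval. Since $Q_\epsilon$ has degree four, these are all of its roots, and they are real and simple.

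One could instead invoke the implicit function theorem, using $Q_0'(\lambda_j^0)\neq 0$. The sign-change version has the advantage of giving an explicit $\epsilon_{\rm crit}$.

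\textbf{Step 3: $u_2$ stays separated from the other roots.} We must ensure that none of the perturbed roots equals $u_2$. Since $\delta$ is less than half the distance from each $\lambda_j^0$ to $u_2$, the intervals exclude $u_2$, so $u_2$ is not among the four perturbed roots. Alternatively, this can be seen directly: $Q_\epsilon(u_2) = -c_2^2\big((u_2-u_1)^2-c_1^2\big)\neq 0$.

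Hence $p$ has five distinct real roots, and $\mathbf{M}$ has five distinct real eigenvalues. A matrix with distinct eigenvalues is diagonalizable, so it has a complete set of eigenvectors, and the system \eqref{FEF1} is hyperbolic.

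\textbf{Main obstacle.} The one step needing care is confirming that the characteristic polynomial stated before the theorem is indeed $\det(\mathbf{M}-\lambda I)$, and that $c_2$ is real under the stated positivity. That computation is given earlier in the paper, so we only rely on it, and the remaining work is the bookkeeping in Step 1 matching each hypothesis in \eqref{eq:DistintEvalues} to the distinctness it guarantees.
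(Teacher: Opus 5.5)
Your proof is correct, but it argues locally where the paper argues globally.

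\textbf{The paper's route.} It orders the roots of the quartic as $\beta_1<\beta_2<\beta_3<\beta_4$ and compares the quartic's sign pattern with the nonnegative quadratic $\epsilon c_2^2(\lambda-u_2)^2$, which vanishes only at $u_2\in(\beta_1,\beta_4)$.
\begin{itemize}
\item If $u_2\in(\beta_2,\beta_3)$, there are crossings in $(-\infty,\beta_1)$, $(\beta_2,u_2)$, $(u_2,\beta_3)$ and $(\beta_4,\infty)$ for every $\epsilon>0$. Hence $\epsilon_{\text{crit}}=+\infty$.
\item If $u_2$ lies in $(\beta_1,\beta_2)$ or $(\beta_3,\beta_4)$, the two outer roots always exist. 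The two roots in $(\beta_2,\beta_3)$ survive only until the curves become tangent.
\end{itemize}
The paper describes the second case pictorially. It can be made precise: there $u_2\notin[\beta_2,\beta_3]$, so $Q_\epsilon$ is strictly decreasing in $\epsilon$ on $[\beta_2,\beta_3]$. The sharp threshold is then $\max_{[\beta_2,\beta_3]} Q_0(\lambda)/\bigl(c_2^2(\lambda-u_2)^2\bigr)$.

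\textbf{Comparison.} Your sign-change argument on the intervals $[\lambda_j^0-\delta,\lambda_j^0+\delta]$ avoids this case analysis. It is fully rigorous and gives an explicit $\epsilon_{\text{crit}}>0$, which is all the statement requires. The price is that your threshold is generally far from optimal. It cannot detect the case $\epsilon_{\text{crit}}=+\infty$. It also says nothing about how hyperbolicity is lost, namely by collision of the two middle roots in $(\beta_2,\beta_3)$. The paper's remarks after the theorem, about loss of hyperbolicity in the nearly full and nearly dry regimes, implicitly refer to the sharp threshold, which only the global argument provides.

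\textbf{A small bookkeeping slip.} You say the hypothesis $(u_2-u_1)^2\neq c_2^2$ is used in Step 3, but it is not. Both of your arguments there use only $(u_2-u_1)^2\neq c_1^2$. In fact that hypothesis is not needed at all, since $u_1=u_2\pm c_2$ creates no coincidence among $u_2$, $u_1\pm c_1$, $u_2\pm c_2$. This is harmless to your proof.
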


We note that this theorem indicates that our system may lose hyperbolicity in the two extreme regimes, namely when the duct is almost completely filled with water or nearly dry and filled with gas. We also allow for the possibility that $\epsilon_{\text{crit}}$ may be equal to $+\infty$ in certain special cases, as will be shown below.

\begin{proof}

\begin{figure}[htp]
\centering
{\includegraphics[width=.46\textwidth]{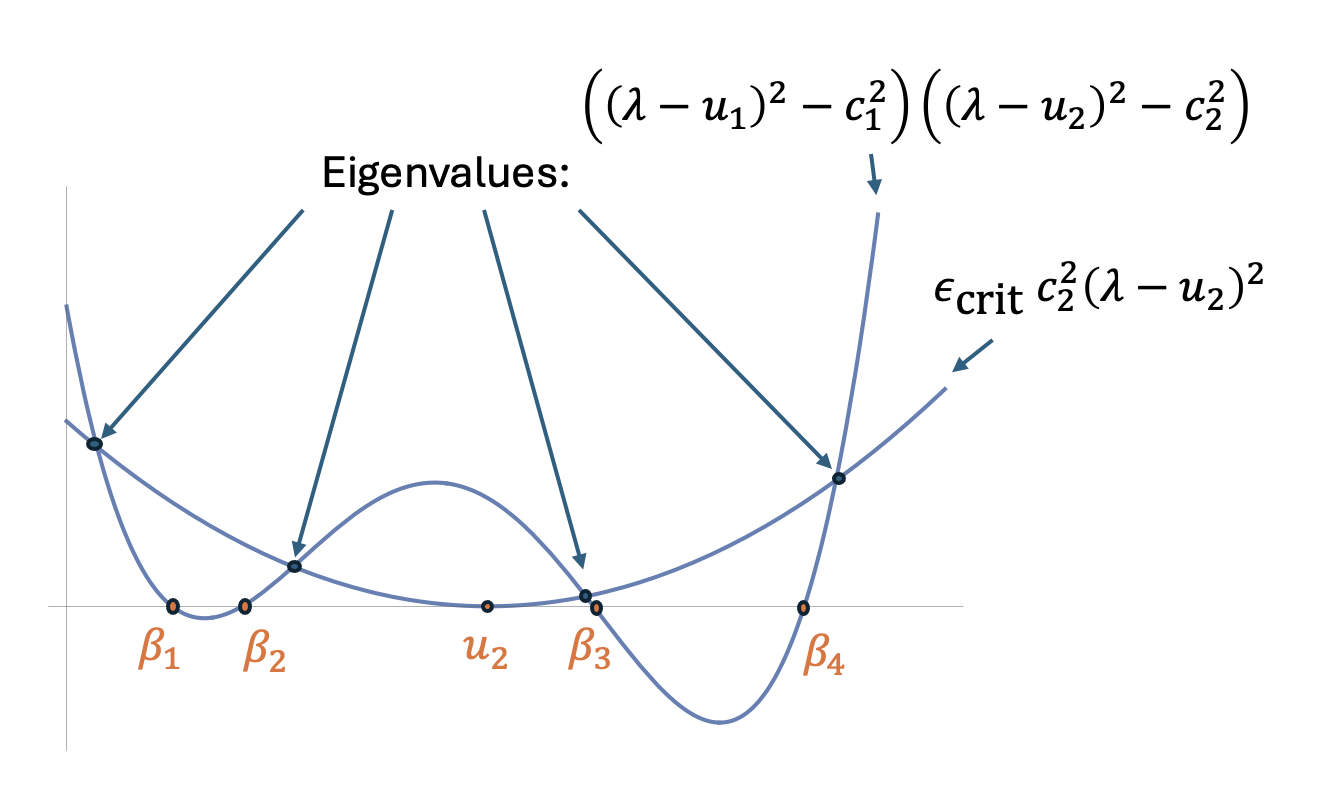}}
{\includegraphics[width=.49\textwidth]{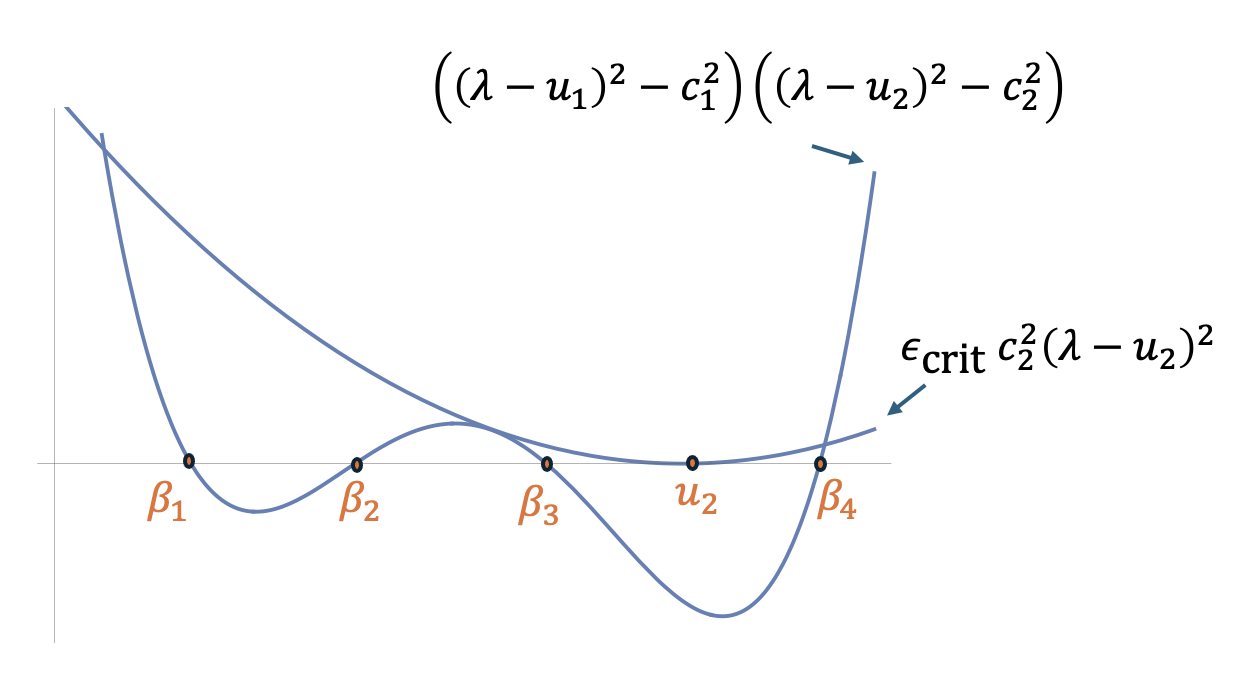}}
\caption{Schematic representation of the eigenvalue locations. Left: case of 4 distinct real eigenvalues. They are located at the intersections of the two curves defined by the polynomials on both sides of \eqref{eq:4thDegreePoly} (projected onto the horizontal axis). Right: case where one eigenvalue is a double zero, separating the regions of hyperbolicity. }
\label{fig:Eigenvalues}
\end{figure}

The left hand side of equation \eqref{eq:4thDegreePoly} is a quartic polynomial, with zeroes at $u_1 \pm c_1, \; u_2 \pm c_2$. Conditions in equation \eqref{eq:DistintEvalues} guarantees that those zeros are distinct. Let us arrange them in increasing order such that $\{ u_1 \pm c_1, u_2\pm c_2 \} = \{ \beta_1 < \beta_2 < \beta_3 < \beta_4 \}$. The quadratic polynomial on the right hand side is non-negative and vanishes at $u_2$. Due to the last conditions in \eqref{eq:DistintEvalues}, $u_2$ cannot be a zero of the quartic polynomial, and therefore  $ \beta_1 < u_2 < \beta_4$. 

Now, the quartic polynomial is positive in $[\beta_2,\beta_3]$. If $\beta_2 < u_2 < \beta_3$, the two polynomials cross in two points in that interval, and then two more crossing points exist, one in the interval $(-\infty,\beta_1)$ and the other one in $(\beta_4, \infty)$.  In that case, we have 5 distinct real eigenvalues (including $u_2$), obtaining a complete basis of eigenvectors. In this special case, the system is always hyperbolic, regardless of the value of $\epsilon$, leading to $\epsilon_{\text{crit}}= + \infty$. That situation is depicted in the left panel of Figure \ref{fig:Eigenvalues}. 

The other two possibilities are: $\beta_1 < u_2 < \beta_2$ or $\beta_3 < u_2 < \beta_4$. In any of these cases, for small enough $\epsilon$, the graph of the quadratic function in the right hand side of equation \eqref{eq:4thDegreePoly} is almost flat in the interval $[\beta_1,\beta_4]$, crossing the quartic polynomial in 4 distinct points. As we increase $\epsilon$, the quadratic polynomial becomes steeper and eventually intersects the quartic polynomial tangentially for a critical value $\epsilon_\text{crit}$. Larger values of $\epsilon$ will result in no crossing points in the interval $[\beta_2,\beta_3]$, giving two complex eigenvalues. This situation is depicted in the right panel of Figure \ref{fig:Eigenvalues}.
\end{proof}

\noindent
{\bf Upper and lower bounds for the eigenvalues.}
Since we have no explicit expressions for the eigenvalues, we would like to provide explicit bounds that could be useful for the computations of local speeds in the central-upwind scheme we use. For that end, we follow ideas of \cite{abgrall2009two}. 

\begin{theorem}
\label{thm:EvalueBounds}
Assume the hypotheses of Theorem \ref{thm:Hyp} and that $0 < \epsilon < \epsilon_{\text{crit}}$. Then, the eigenvalues are bounded below and above by
\[
\alpha_\text{min} = \min (\lambda_{1,*}^-,\lambda_{2,*}^-) \text{ and } \alpha_\text{max} = \max(\lambda_{1,*}^+, \lambda_{2,*}^+),
\]
where
\[
\lambda_{1,*}^\pm = u_1 \pm \sqrt{c_1^2 + (1+\epsilon) c_2^2} , \text{ and } \lambda_{2,*}^\pm = u_2 \pm \sqrt{1+\epsilon} \; c_2 .
\]
\end{theorem}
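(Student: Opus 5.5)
The idea is to show directly that the quartic in \eqref{eq:4thDegreePoly} has no real zeros outside the interval $[\alpha_\text{min},\alpha_\text{max}]$. By Theorem \ref{thm:Hyp}, for $0<\epsilon<\epsilon_\text{crit}$ all eigenvalues are real, so they must then lie in that interval. Write
\[
Q(\lambda)=\left((\lambda-u_1)^2-c_1^2\right)\left((\lambda-u_2)^2-c_2^2\right)-\epsilon\, c_2^2(\lambda-u_2)^2 .
\]
The eigenvalues of the coefficient matrix \eqref{quasilinearform} are $u_2$ together with the four zeros of $Q$.

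The remaining eigenvalue $u_2$ is handled at once. Since $\lambda_{2,*}^-<u_2<\lambda_{2,*}^+$, it lies strictly inside $[\alpha_\text{min},\alpha_\text{max}]$.

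The key step is to prove that $Q(\lambda)>0$ whenever $\lambda>\alpha_\text{max}$ or $\lambda<\alpha_\text{min}$. Set $a=\lambda-u_1$ and $b=\lambda-u_2$. In either case $\lambda$ lies outside both intervals $[\lambda_{1,*}^-,\lambda_{1,*}^+]$ and $[\lambda_{2,*}^-,\lambda_{2,*}^+]$. Hence
\[
a^2>c_1^2+(1+\epsilon)c_2^2 \quad\text{and}\quad b^2>(1+\epsilon)c_2^2 ,
\]
and only these squares enter $Q$, so the same argument covers both tails. From these inequalities, $a^2-c_1^2>(1+\epsilon)c_2^2>0$ and $b^2-c_2^2>\epsilon c_2^2\ge 0$. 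Both factors of the product are therefore positive, which gives
\[
\left(a^2-c_1^2\right)\left(b^2-c_2^2\right)>(1+\epsilon)c_2^2\left(b^2-c_2^2\right).
\]
It remains to check that $(1+\epsilon)c_2^2(b^2-c_2^2)\ge \epsilon c_2^2 b^2$. This is equivalent to $b^2\ge(1+\epsilon)c_2^2$, which holds. Hence $Q(\lambda)>0$ and $\lambda$ is not a root.

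Since $Q$ has no real zeros outside $[\alpha_\text{min},\alpha_\text{max}]$, and all its zeros are real by Theorem \ref{thm:Hyp}, all five eigenvalues are bounded below by $\alpha_\text{min}$ and above by $\alpha_\text{max}$. I do not expect a real obstacle. The only delicate point is making sure both factors are positive before multiplying the inequalities, and the choice of $\lambda_{1,*}^\pm$ with the extra $(1+\epsilon)c_2^2$ term is exactly what makes the first factor large enough to absorb the coupling term.
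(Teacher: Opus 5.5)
Your proof is correct, and it is organized differently from the paper's.

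The paper evaluates the quartic only at the two points $\alpha_\text{max}$ and $\alpha_\text{min}$ and shows it is nonnegative there. It splits into cases according to whether $\lambda_{1,*}^+$ or $\lambda_{2,*}^+$ is larger; in the first case it uses, for example, $(\lambda_{1,*}^+-u_2)^2\ge(\lambda_{2,*}^+-u_2)^2$. It then combines $\alpha_\text{max}\ge\beta_4$ and $\alpha_\text{min}\le\beta_1$ with the root picture from Theorem \ref{thm:Hyp} to conclude that the extreme roots lie inside. That last step tacitly needs $Q$ to have exactly one zero in $(\beta_4,\infty)$, and likewise in $(-\infty,\beta_1)$. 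This is true, since $Q/(\lambda-u_2)^2$ is strictly increasing on $(\beta_4,\infty)$, but the paper does not state it. Without it, a nonnegative value at a single point does not exclude roots further out.

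You instead prove $Q>0$ on the whole exterior $\{\lambda>\alpha_\text{max}\}\cup\{\lambda<\alpha_\text{min}\}$. This needs no case split, because any such $\lambda$ beats both candidate bounds at once. It also needs no count of how many roots lie in each tail. The underlying algebra is the same cancellation: your identity $(1+\epsilon)c_2^2(b^2-c_2^2)-\epsilon c_2^2 b^2=c_2^2\bigl(b^2-(1+\epsilon)c_2^2\bigr)$ is what makes the paper's final expression $(r_1-\epsilon c_2^2)r_2-\epsilon c_2^4$ vanish. The paper's version ties the bound to the graphical intersection picture of Theorem \ref{thm:Hyp}. Yours is self-contained.

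A side benefit of your route is that every real root of $Q$ lies in $[\alpha_\text{min},\alpha_\text{max}]$ for any $\epsilon>0$. The hypothesis $\epsilon<\epsilon_\text{crit}$ is used only to guarantee that all roots are real.
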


\begin{proof}
The bounds defined above satisfy
\[
(\lambda_{1,*}^\pm-u_1)^2 -c_1^2 = r_1 := (1+\epsilon)c_2^2, \; (\lambda_{2,*}^\pm-u_2)^2-c_2^2 = r_2 := \epsilon c_2^2. 
\]
Suppose for instance that $\alpha_\text{max} = \lambda_{1,*}^+$. Then
\[
\begin{array}{lcl}
q(\alpha_\text{max}) & = & ((\alpha_\text{max}-u_1)^2-c_1^2)((\alpha_\text{max}-u_2)^2-c_2^2)- \epsilon c_2^2 (\alpha_\text{max}-u_2)^2 \\
& = &  r_1 ((\lambda_{1,*}^\pm-u_2)^2-c_2^2)- \epsilon c_2^2 (\lambda_{1,*}-u_2)^2 \\
& = & (r_1-\epsilon c_2^2) (\lambda_{1,*}^+-u_2)^2-r_1 c_2^2 \\
& \ge & (r_1-\epsilon c_2^2) (c_2^2+r_2) - r_1 c_2^2 \text{ if }  r_1-\epsilon c_2^2 \ge 0 \text{ because } \lambda_{1,*}^+ \ge \lambda_{2,*}^+ \\
& = & (r_1 - \epsilon c_2^2) r_2 - \epsilon c_2^4 .
\end{array}
\]

The other option is that $\alpha_\text{max} = \lambda_{2,*}^+$. In that case,
\[
\begin{array}{lcl}
q(\alpha_\text{max}) & = &  ((\lambda_{2,*}^\pm-u_1)^2-c_1^2) r_2 - \epsilon c_2^2 (\lambda_{2,*}-u_2)^2 \\
& = &  ((\lambda_{2,*}^\pm-u_1)^2-c_1^2) r_2 - \epsilon c_2^2 (c_2^2+r_2) \\
& \ge & r_1 r_2 - \epsilon c_2^2 (c_2^2+r_2) \text{ because } \lambda_{2,*}^+ \ge \lambda_{1,*}^+
= (r_1 - \epsilon c_2^2) r_2 - \epsilon c_2^4 .
\end{array}
\]
We note that $(r_1-\epsilon c_2^2) r_2 - \epsilon c_2^4 = 0$ if $r_1$ and $r_2$ are defined as above. This shows that  $q(\alpha_\text{max})  \ge 0$, and similarly $q(\alpha_\text{min}) \ge 0$. Now, but construction, $\alpha_\text{min} \le \beta_1$ and $\alpha_\text{max} \ge \beta_4$. So, $\alpha_\text{min}$ has to stay to the left of the first eigenvalue and $\alpha_\text{max}$ must be greater than the largest one. 

\end{proof}

\noindent
{\bf Asymptotic approximations of the eigenvalues as $\epsilon \to 0$.}
The factor $\epsilon$ is often small if the gas density is much smaller compared to the density of the lower layer in liquid phase, or if $A_1/A_2$ is small. For instance, the ratio of densities between air and water is about $10^{-3}$. The following theorem gives us approximated eigenvalues for $\epsilon << 1$ small.

\begin{theorem}
\label{thm:EvalueApprox}
Assume that $(u_2-u_1)^2 \neq (c_2 \pm c_1)^2$ and that $0 < \epsilon < \epsilon_{\text{crit}}$. The eigenvalues of the coefficient matrix \eqref{quasilinearform} satisfy the following asymptotic approximations as $\epsilon \to 0$, maintaining all other parameters fixed:
\[
\begin{array}{lclcl}
\lambda_1^- & = & \gamma_1^- + O(\epsilon^2), & \text{where} & \gamma_1^- =  u_1 - c_1 -  \frac{c_2^2}{2 c_1} \frac{(u_1- c_1-u_2)^2}{(u_1- c_1-u_2)^2-c_2^2}  \epsilon, \\
\lambda_1^+ & = & \gamma_1^+ + O(\epsilon^2), & \text{where} & \gamma_1^+ = u_1 + c_1 +  \frac{c_2^2}{2 c_1} \frac{(u_1+ c_1-u_2)^2}{(u_1+ c_1-u_2)^2-c_2^2}  \epsilon, \\
\lambda_{2}^- & = & \gamma_1^- + O(\epsilon^2), & \text{where} & \gamma_1^- = u_2 - c_2 - \frac{c_2}{2} \frac{c_2^2}{(u_2- c_2-u_1)^2-c_1^2}  \epsilon,\\
\lambda_{2}^0 & = &  u_2, \text{ and }&& \\
\lambda_{2}^+ & = & \gamma_1^- + O(\epsilon^2), & \text{where} & \gamma_1^- =  u_2 + c_2 \pm \frac{c_2}{2} \frac{c_2^2}{(u_2\pm c_2-u_1)^2-c_1^2}  \epsilon.
\end{array}
\]
\end{theorem}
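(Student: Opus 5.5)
The plan is a regular perturbation of simple roots of the quartic equation \eqref{eq:4thDegreePoly}, together with the factorization of $p(\lambda)$ that isolates $\lambda_2^0=u_2$ exactly. Write
\[
F(\lambda,\epsilon)=\big((\lambda-u_1)^2-c_1^2\big)\big((\lambda-u_2)^2-c_2^2\big)-\epsilon\, c_2^2(\lambda-u_2)^2 .
\]
By the factorization of the characteristic polynomial, $u_2$ is always an eigenvalue, and the other four are the zeros of $F(\cdot,\epsilon)$. At $\epsilon=0$ these zeros are $u_1\pm c_1$ and $u_2\pm c_2$. The hypothesis $(u_2-u_1)^2\neq(c_2\pm c_1)^2$ guarantees that a root of one factor never coincides with a root of the other. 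Since $c_1,c_2>0$ under $A_1,P_2>0$, the two roots within each factor are also distinct. Hence all four roots are simple, i.e. $\partial_\lambda F(\beta_i,0)\neq 0$.

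By the implicit function theorem, since $F$ is polynomial and hence analytic in $(\lambda,\epsilon)$, each simple root continues to a smooth (indeed analytic) branch $\lambda(\epsilon)$. Expanding it gives $\lambda(\epsilon)=\beta+\lambda'(0)\epsilon+O(\epsilon^2)$ with
\[
\lambda'(0)=-\frac{\partial_\epsilon F(\beta,0)}{\partial_\lambda F(\beta,0)}=\frac{c_2^2(\beta-u_2)^2}{\partial_\lambda F(\beta,0)} .
\]
The restriction $0<\epsilon<\epsilon_{\rm crit}$ and Theorem \ref{thm:Hyp} ensure that these branches are the real eigenvalues and are labelled consistently. They do not collide for small $\epsilon$ because the roots are separated at $\epsilon=0$.

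The four coefficients follow from this formula. For $\beta=u_1\pm c_1$, we have $\partial_\lambda F=\pm 2c_1\big((u_1\pm c_1-u_2)^2-c_2^2\big)$. This gives a correction of $\pm\frac{c_2^2}{2c_1}\frac{(u_1\pm c_1-u_2)^2}{(u_1\pm c_1-u_2)^2-c_2^2}\epsilon$. The denominator is nonzero precisely by the hypothesis. For $\beta=u_2\pm c_2$, we have $\partial_\lambda F=\pm 2c_2\big((u_2\pm c_2-u_1)^2-c_1^2\big)$, and $(\beta-u_2)^2=c_2^2$. This gives $\pm\frac{c_2}{2}\frac{c_2^2}{(u_2\pm c_2-u_1)^2-c_1^2}\epsilon$, matching the stated $\lambda_2^\pm$.

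The only delicate point is justifying that the $O(\epsilon^2)$ remainder is uniform and that no other roots enter the picture. I would handle this by noting that $F$ is polynomial, so the second derivative of each branch is bounded on a small $\epsilon$-interval. This gives the uniformity. I would also observe that the degree in $\lambda$ stays four, so exactly four branches exist and no further roots appear. The sign bookkeeping in $\partial_\lambda F$ is routine but is where errors are likely, so I would verify each case separately.
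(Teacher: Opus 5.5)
Your proposal is correct and follows essentially the same route as the paper, which substitutes $\lambda=\lambda_0+\lambda_1\epsilon+O(\epsilon^2)$ into \eqref{eq:4thDegreePoly} and solves the zeroth- and first-order equations case by case; your formula $\lambda'(0)=-\partial_\epsilon F/\partial_\lambda F$ is exactly the solution of the correct first-order equation, which includes the factor $\lambda_1$ in the term $(\lambda_0-u_2)\lambda_1((\lambda_0-u_1)^2-c_1^2)$ that the paper's display omits. Your implicit function theorem step adds the justification, which the paper takes for granted, that each simple root actually admits such an expansion with a genuine $O(\epsilon^2)$ remainder; it also already yields real branches for small $\epsilon$, so your appeal to Theorem \ref{thm:Hyp} for reality and labelling is not needed.
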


\begin{proof}
Substituting an expansion $\lambda = \lambda_o + \lambda_1 \epsilon + O(\epsilon^2)$, and collecting terms of the same order, we get
\[
((\lambda_0-u_1)^2-c_1^2 ) ((\lambda_0-u_2)^2-c_2^2) = 0, \text{ and }
\]
\[
(\lambda_0-u_1) \lambda_1 ((\lambda_0-u_2)^2-c_2^2) + (\lambda_0-u_2) ((\lambda_0-u_1)^2-c_1^2) - \frac{c_2^2}{2} (\lambda_0-u_2)^2 = 0.
\]
We get to the conclusions after solving the above equations for the different cases. 
\end{proof}

\begin{figure}[htp]
\centering
{\includegraphics[width=.8\textwidth]{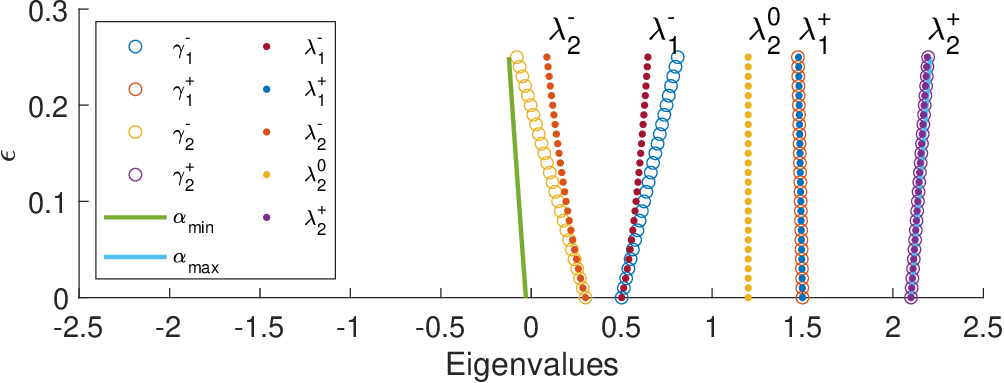}}
\caption{Plot of eigenvalues versus $\epsilon$, maintaining other parameters fixed: $u_1 = 1,\; c_1 = 0.5, \;, u_2 = 1.2, \; c_2 = 0.9$. Approximations are also displayed ($\gamma_{k}^\pm, k = 1,2$), together with the bounds $\alpha_\text{min}, \; \alpha_\text{max}$. }
\label{fig:EigenvaluesApproxBounds}
\end{figure}

We test the theorems in this section in the plot shown in Figure \ref{fig:EigenvaluesApproxBounds}.  We display the exact eigenvalues (horizontal axis) for different values of $\epsilon$ (vertical axis). Also plotted are the eigenvalue approximations $\gamma_{k}^\pm, \; k =1,2$ obtained in Theorem \ref{thm:EvalueApprox}. As one can see, for the chosen fixed parameters ($u_1 = 1,\; c_1 = 0.5, \;, u_2 = 1.2, \; c_2 = 0.9$), excellent eigenvalue approximations are observed for $\lambda_{1}^+$ and $\lambda_{2}^+$ for all values of $\epsilon \in [0, 0.25]$. In contrast, the eigenvalue approximations for $\lambda_1^-$ and $\lambda_2^-$ are good only for small values of $\epsilon$. The green and blue solid lines indicate the bounds obtained according to Theorem \ref{thm:EvalueBounds}. For the chosen parameter regime, the upper bound seems to be quite optimal. Although the lower bound is somewhat smaller than the first eigenvalue, it is of the same order of magnitude. Of course, how optimal the bounds and approximations are may depend on the parameter regime.

%%-----
\section{Numerical Results} 
\label{sec:NumResults}

In this section, we present several numerical tests that show the merits of the model and the numerical algorithm. We use a well-balanced, positivity-preserving central-upwind scheme for approximating solutions to the formulated novel model. For further background on central and central-upwind methods, we refer the reader to \cite{kurganov2009central, kurganov2000new}.

Most of the tests involve a situation where the gas density is much smaller than the liquid density. This occurs for instance, in ducts with water and air. In those cases, one would expect that any perturbation of the liquid layer directly affects the evolution of the gas above it. On the contrary, the influence of the gas layer on the liquid through momentum and energy exchanges between layers is expected to be very weak. The feedback is then mostly one way. Cases of interest may also include liquid and gas interactions where the density difference is not so pronounced, like liquid and gas hydrogen. We include one such numerical test where the feedback between the two layers is significant in both directions.

In all numerical tests, we implement free boundary conditions. That is, we impose Dirichlet boundary conditions (coinciding with the initial condition) at inflow, and apply extrapolation at outflow. An inflow/outflow occurs when the coefficient matrix has eigenvalues that point inward/outward. Unless otherwise stated, we use the following model parameter values: $g = 9.8$ and $\gamma_2 = 1.4$. 

\subsection{Cylindrical pipe - well-balance test}

\begin{figure}
\centering
\begin{subfigure}{0.49\textwidth}
    {\includegraphics[width=.99\textwidth]{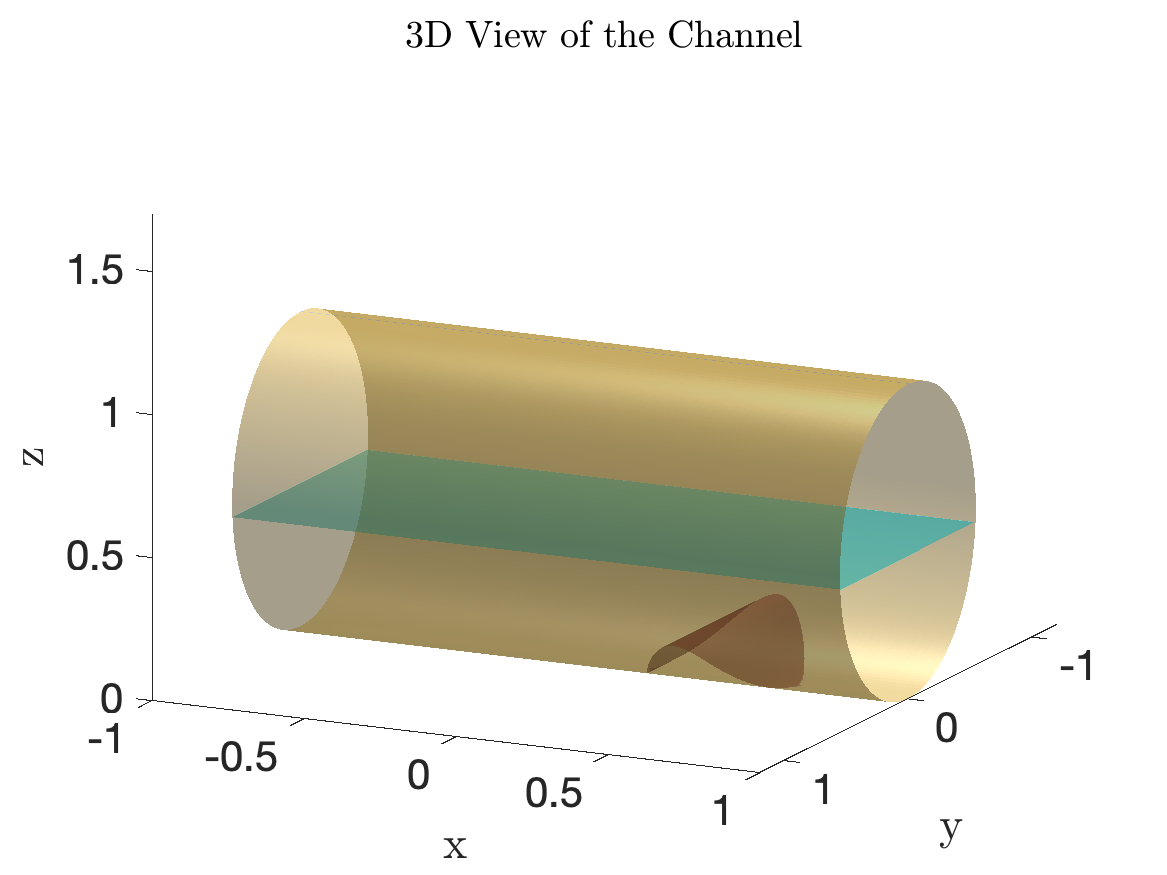}}
    \caption{3D view.}
\end{subfigure}
\begin{subfigure}{0.49\textwidth}
	{\includegraphics[width = 0.49 \textwidth]{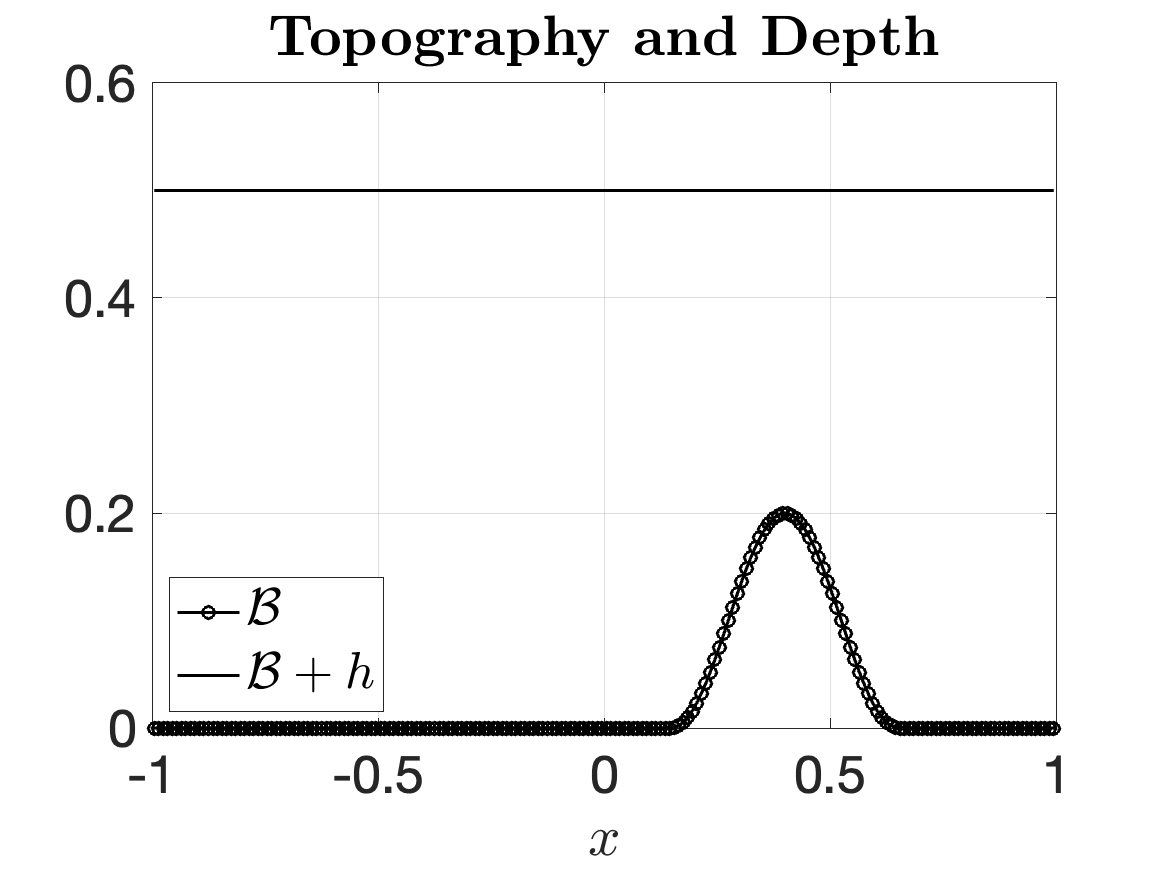}}
	{\includegraphics[width = 0.49\textwidth]{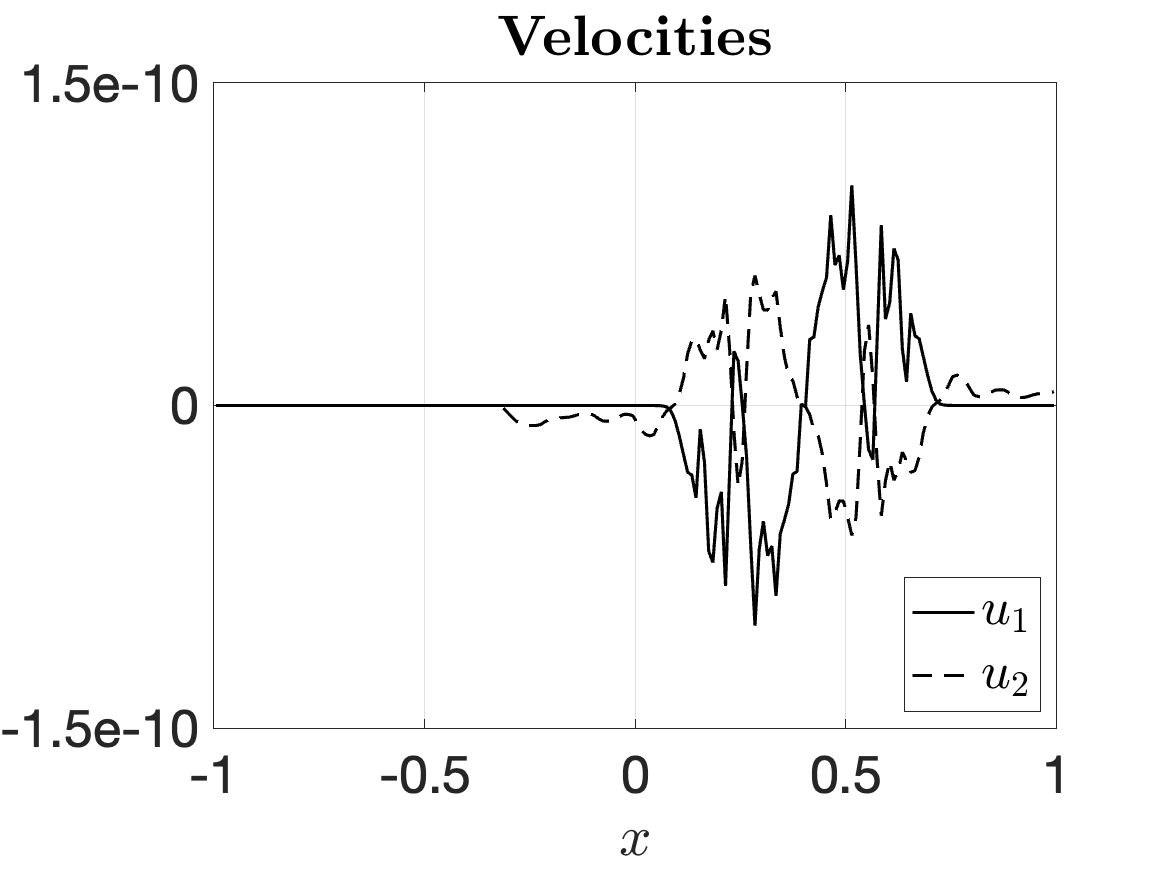}}\\
	{\includegraphics[width = 0.49 \textwidth]{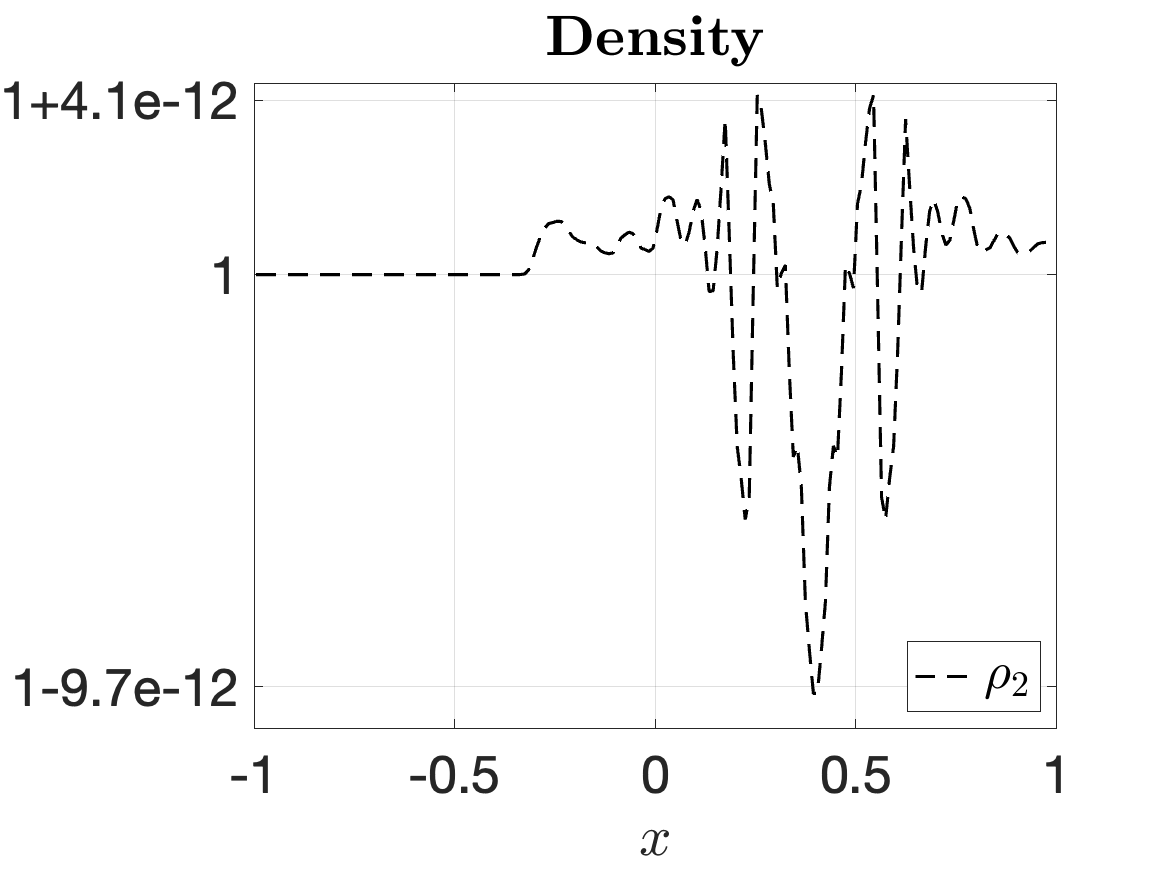}}
	{\includegraphics[width = 0.49 \textwidth]{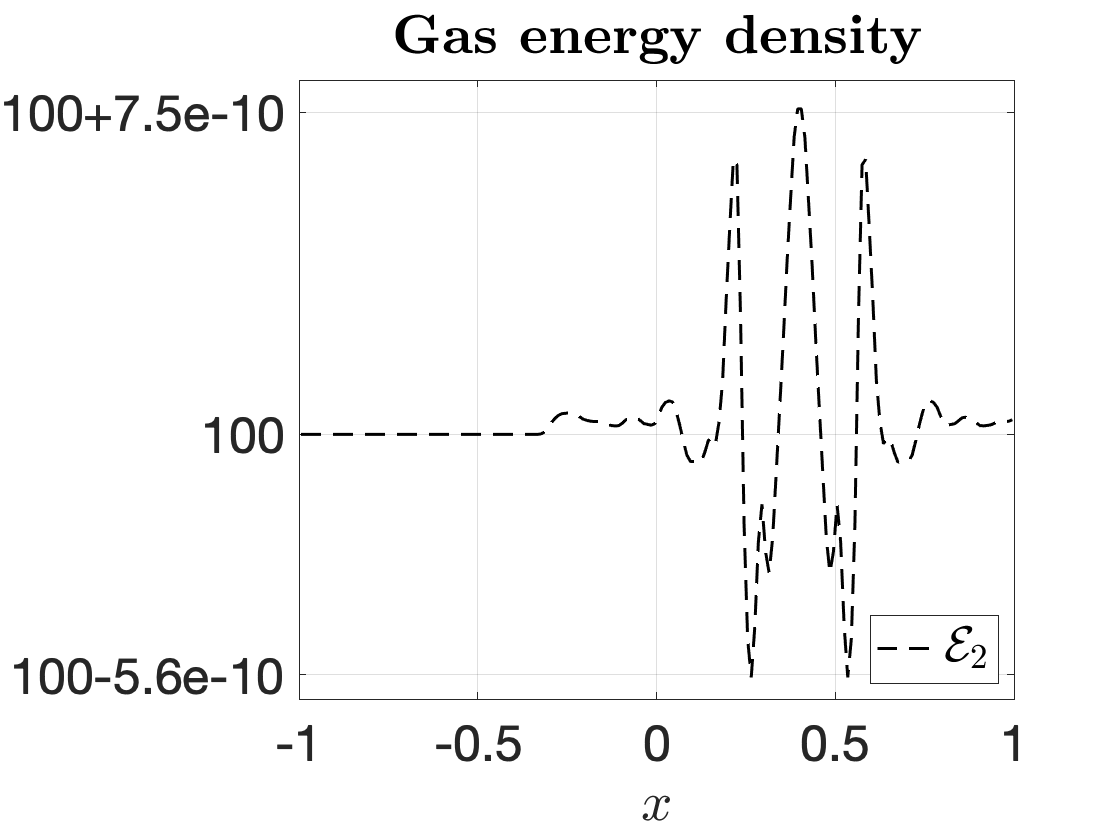}}
    \caption{Profiles.}
\end{subfigure}        
\caption{Left: 3D view of the duct at time $t= 0.06$ with initial conditions, topography and channel's width given by \eqref{eq:IC_wellbalance}, \eqref{bottombump} and \eqref{cylinder} respectively. Right: Profiles of topography and interface (top left), velocities (top right), gas density (bottom left) and gas energy density (bottom right) at  $t=0.06$ using $200$ grid points. }
\label{fig:well-balance}
\end{figure}

In this numerical test, we first consider a continuous topography, which is given by
\begin{eqnarray} \label{bottombump}
\mathcal{B}(x) =
\begin{cases}
\frac{1}{10} \left(\cos\left(\frac{\pi(x-0.4)}{0.25}\right) + 1\right) \hspace{1.0cm} \mbox{if}\,\, x \in [0.15,0.65], \text{ and }\\
0 \hspace{4.5cm} \mbox{if}\,\, x \in [-1,1]\setminus [0.15,0.65].
\end{cases} 
\end{eqnarray}

The channel's wall is described by the width $\sigma(x,z)$ at each axial position $x$, and height $z$, given by
\begin{eqnarray} \label{cylinder}
\sigma(x,z) = 2 \sqrt{\max\left(0,\ r_{0}^2 - (r_{0} - z)^2\right)}, \quad \text{with } r_{0} = \frac{1.1}{2}.
\end{eqnarray}
This corresponds to a cylindrical pipe.
%The parameter values are  $g = 9.8$, and $\gamma_2 = 1.4$ (for air). 
The initial conditions are given by
\begin{eqnarray}
\label{eq:IC_wellbalance}
\rho_1 =1000 ,\,\,\, \rho_2(x,0) = 1, \,\,\,\, u_1(x,0) = u_2(x,0) = 0,\nonumber\\
\,\,\, \mathcal{B}(x) + h(x,0) = 0.5, \,\,\, \mathcal{E}_2(x,0)  = 100. 
\end{eqnarray}

The above initial conditions correspond to a steady state of rest. Our numerical scheme is well balanced, respecting such initial conditions. We confirm this important property with the results shown in Figure \ref{fig:well-balance}. The left panel shows the 3D view of the channel, indicating the topography in brown color at the bottom, while the interface is distinguished by the blue surface. The right panel shows profiles of different quantities. The top left position shows the topography and the (flat) interface. The two velocities are shown in the top right position, showing numerical errors of order $10^{-10}$. The bottom left position shows the density of the gas layer with numerical errors of order $10^{-12}$, while the energy in the bottom right position displays numerical errors of order $10^{-10}$. As one can see, the numerical scheme respects this steady state of rest with numerical perturbations associated only to round-off errors.

%\newpage

\subsection{Riemann problem}

\begin{figure}
\centering
\begin{subfigure}{0.47\textwidth}
    {\includegraphics[width=.99\textwidth]{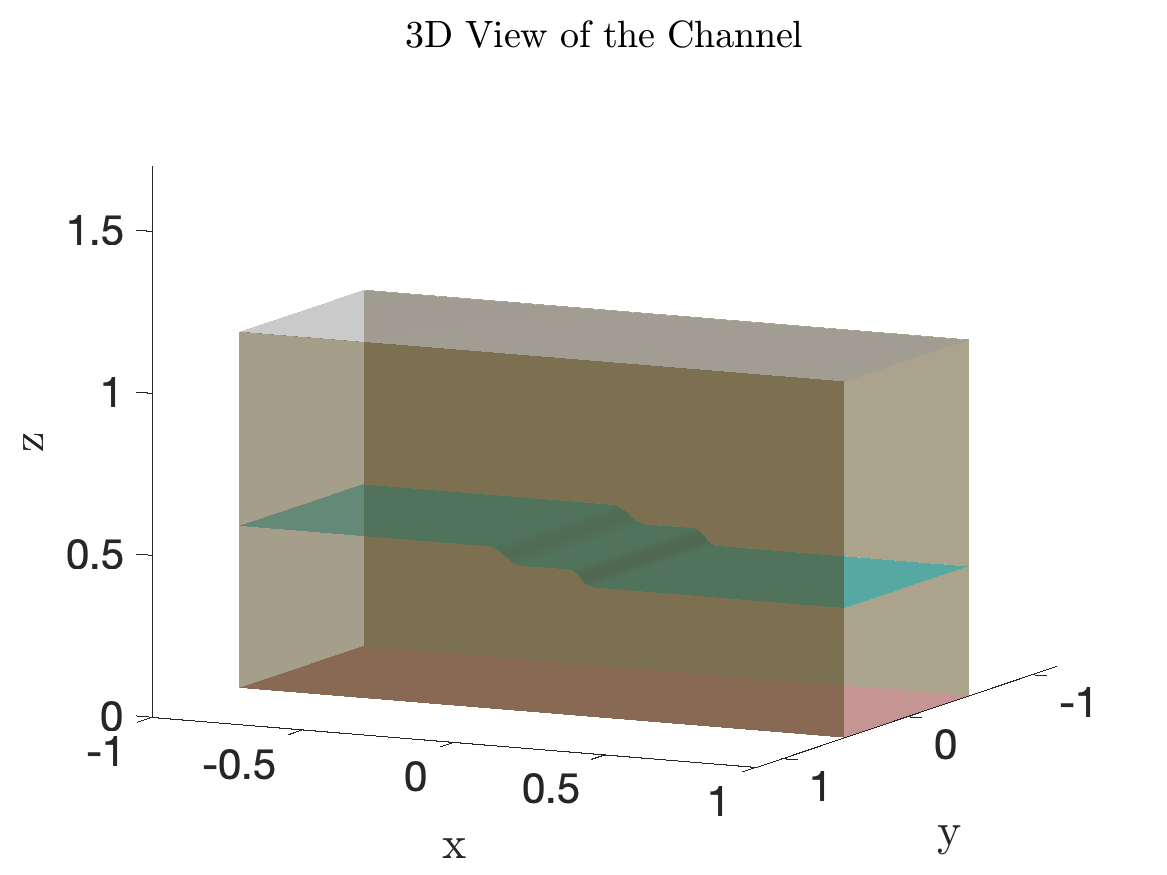}}
    \caption{3D view.}
\end{subfigure}
\begin{subfigure}{0.45\textwidth}
	{\includegraphics[width = 0.46 \textwidth]{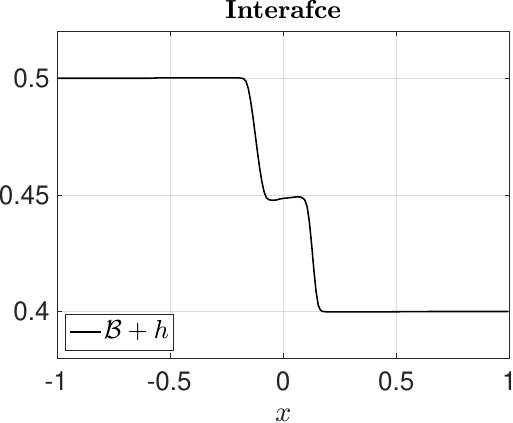}}
	{\includegraphics[width = 0.45\textwidth]{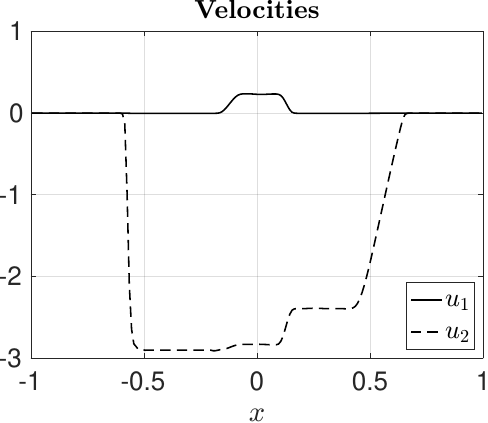}}\\
	{\includegraphics[width = 0.45\textwidth]{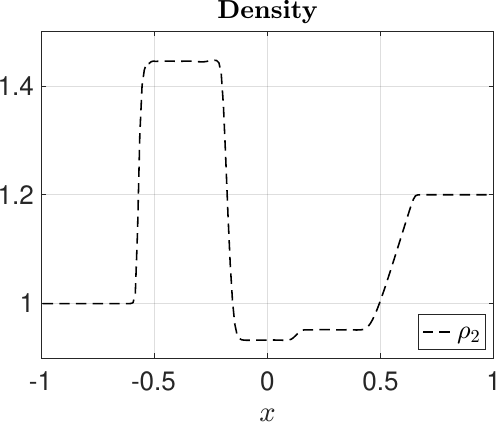}}
	{\includegraphics[width = 0.45\textwidth]{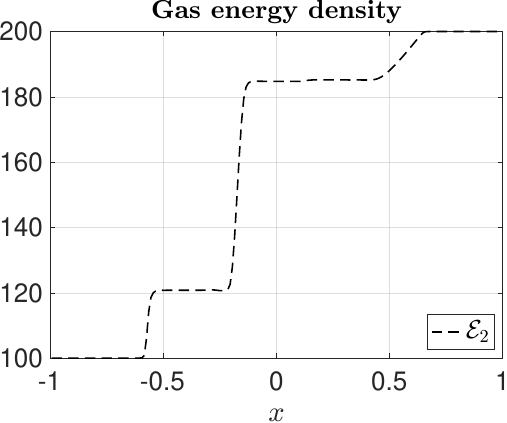}}
    \caption{Profiles.}
\end{subfigure}        
\caption{Solution at $t=0.06$  using $200$ grid points. The initial conditions are given by \eqref{eq:IC_RP}. Left panel: 3D view of the channel. Right panel: interface (top left), velocities (top right), gas density (bottom left) and gas energy density (bottom right) are shown. }
\label{fig:riemann}
\end{figure}

%The model's parameter values are  $g = 9.8$, and $\gamma_2 = 1.4$.
In this Riemann problem test, the bottom topography is flat ($\mathcal{B} = 0$) and the channel has constant width ($\sigma = 1$).   The initial conditions consist of piecewise constant functions with discontinuities only in $h, \mathcal{E}_2$, and $\rho_2$:
\begin{eqnarray}
\label{eq:IC_RP}
\rho_1 =1000 ,\,\, u_1(x,0) = u_2(x,0) = 0, \,\, h(x,0)  = \begin{cases}
0.5 \hspace{0.5 cm} \text{ if }  -1 \le x < 0, \\
0.4  \hspace{0.5cm} \text{ if }   0 \le x \le 1 , 
\end{cases} \nonumber\\
\rho_2(x,0)  = \begin{cases}
1 \hspace{0.5 cm} \text{ if }  -1 \le x < 0, \\
1.2  \hspace{0.5cm} \text{ if }   0 \le x \le 1.
\end{cases},\, \mathcal{E}_2(x,0) = \begin{cases}
100 \hspace{.5cm} \text{ if }   -1 \le x < 0, \\
200  \hspace{.5cm} \text{ if }   0 \le x \le 1.
\end{cases}
\end{eqnarray}

Figure \ref{fig:riemann} shows the solution at $t = 0.06$.  The gas energy density plot (bottom right panel) clearly identifies one left shockwave, one contact discontinuity and one right rarefaction wave. Other variations in the gas density (bottom left panel)  arise due to the influence of the waves associated with the lower layer. The waves associated with the lower layer are clearly identified with the variations in the interface (top left panel) and the liquid velocity (top right panel), one of them located near the gas contact discontinuity. As a result, small variations in the gas velocity are also observed there.

%\newpage

\subsection{Perturbation (in water layer) and convergence to a steady state of rest}

\begin{figure}[htp]
\centering
{\includegraphics[width=.4\textwidth]{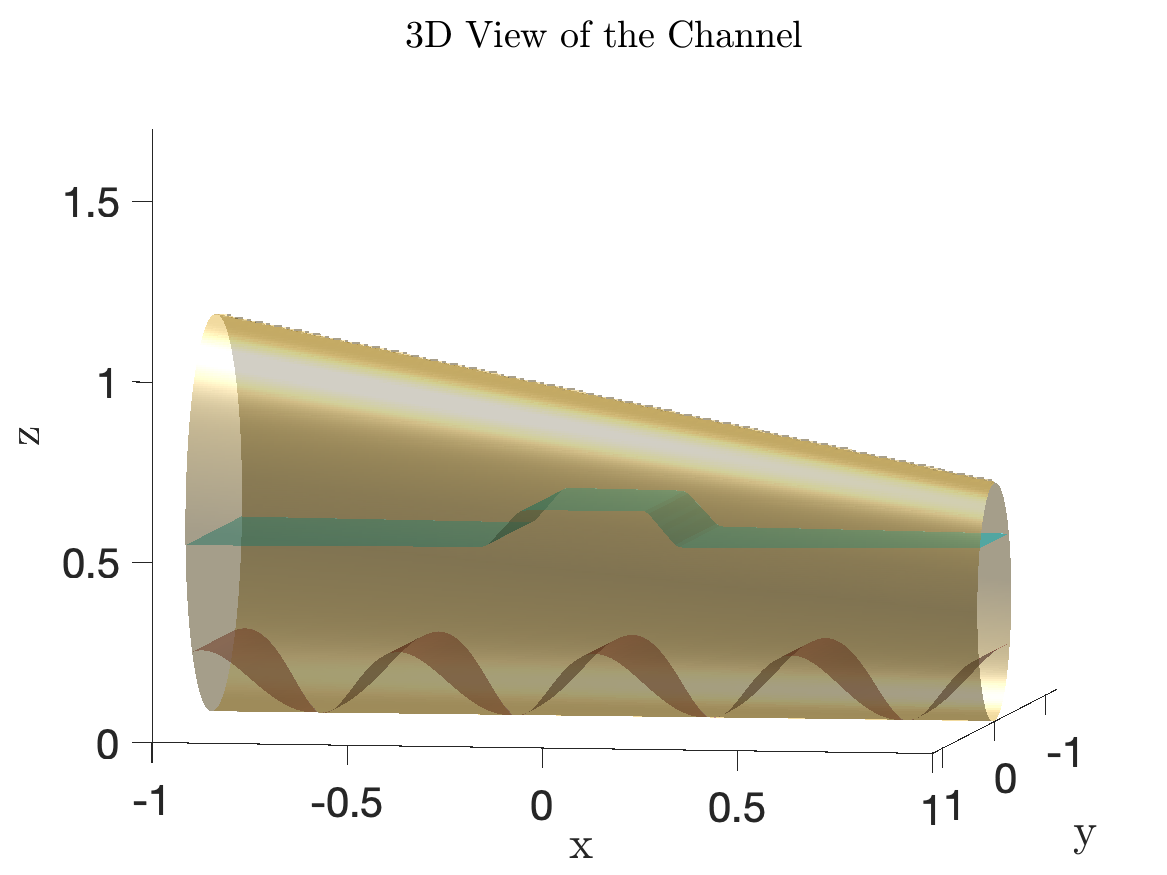}}
{\includegraphics[width=.4\textwidth]{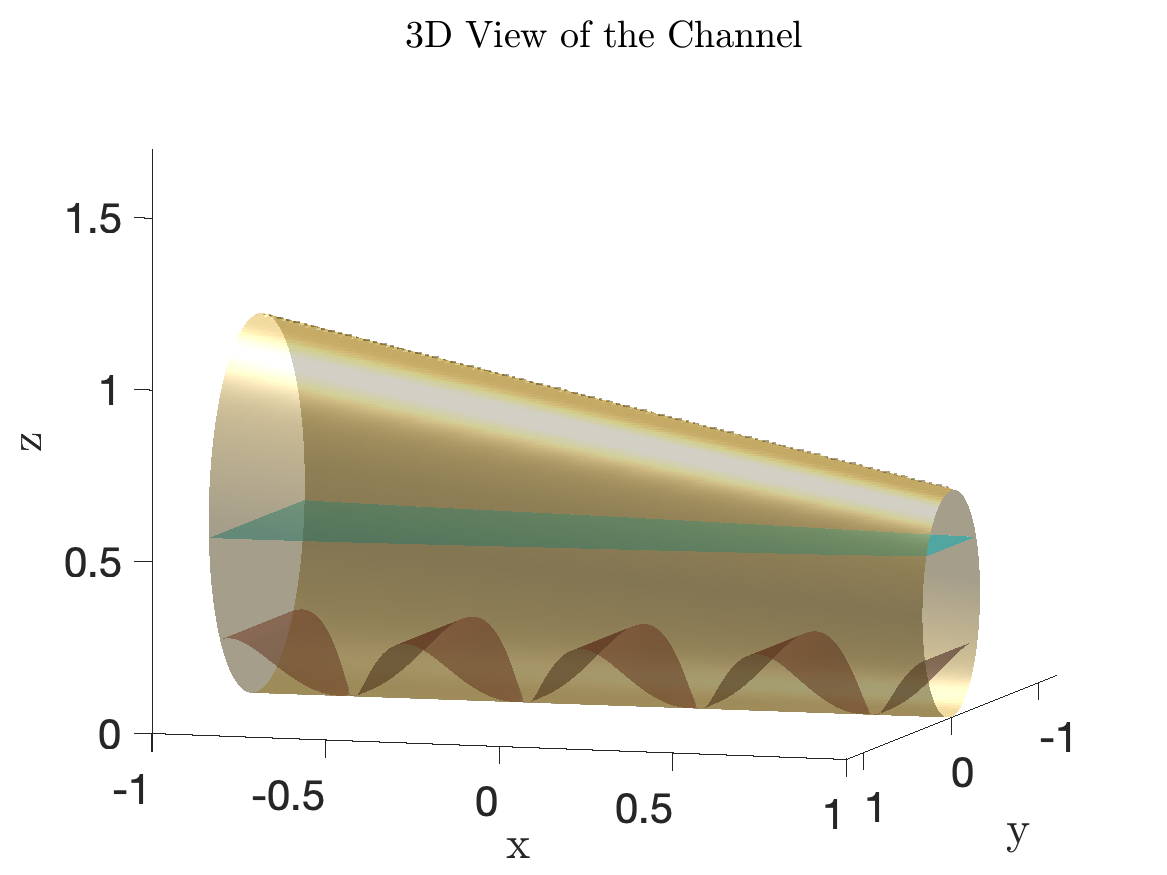}}\\
{\includegraphics[width = 0.21 \textwidth]{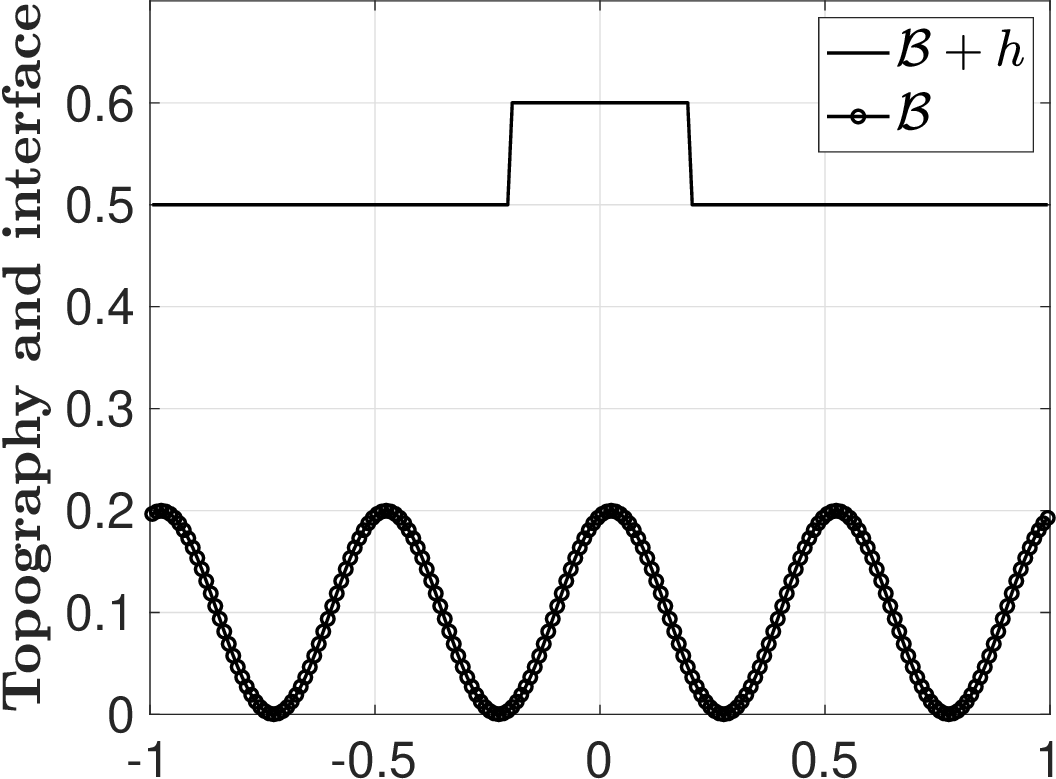}}
{\includegraphics[width = 0.2\textwidth]{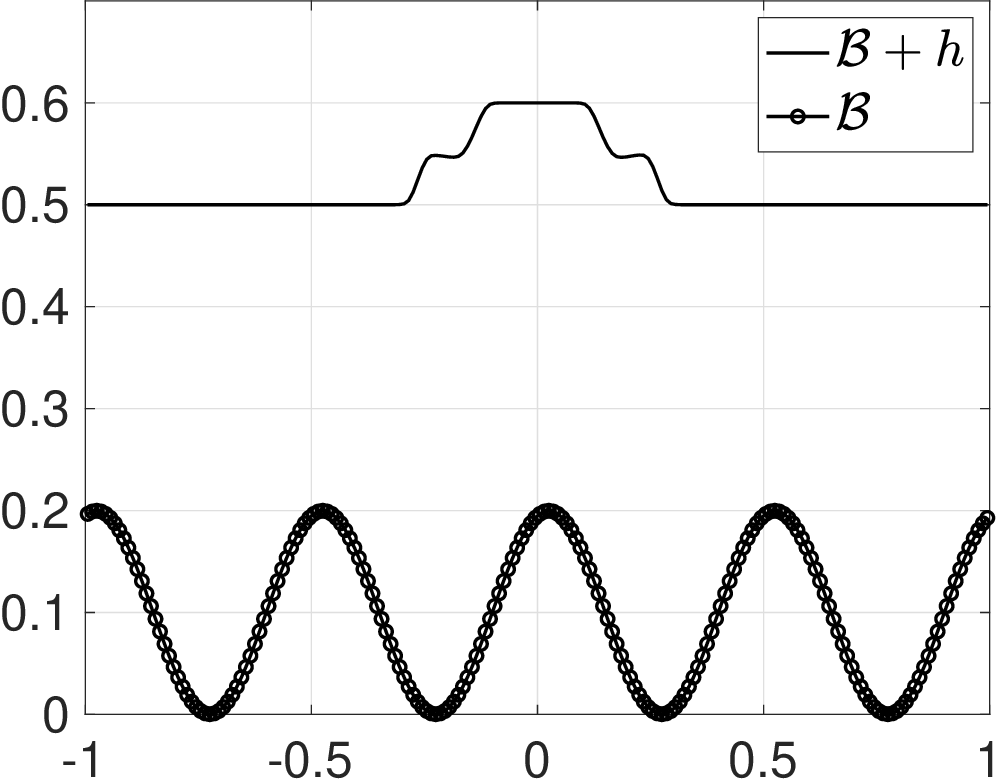}}
{\includegraphics[width = 0.2\textwidth]{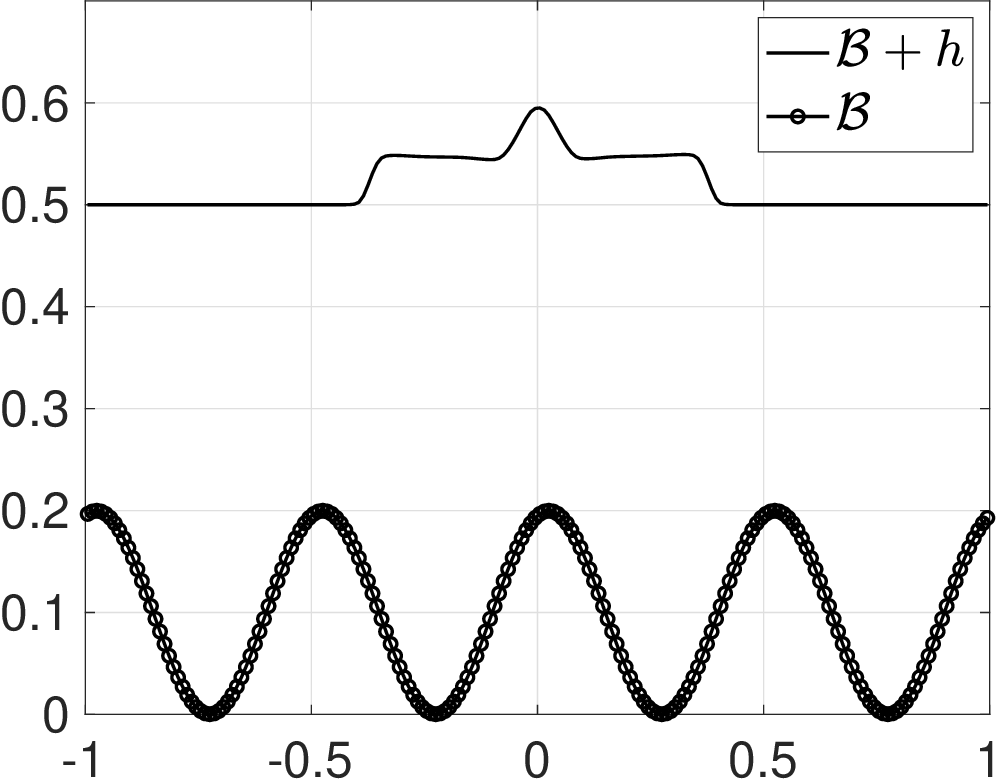}}
{\includegraphics[width = 0.2 \textwidth]{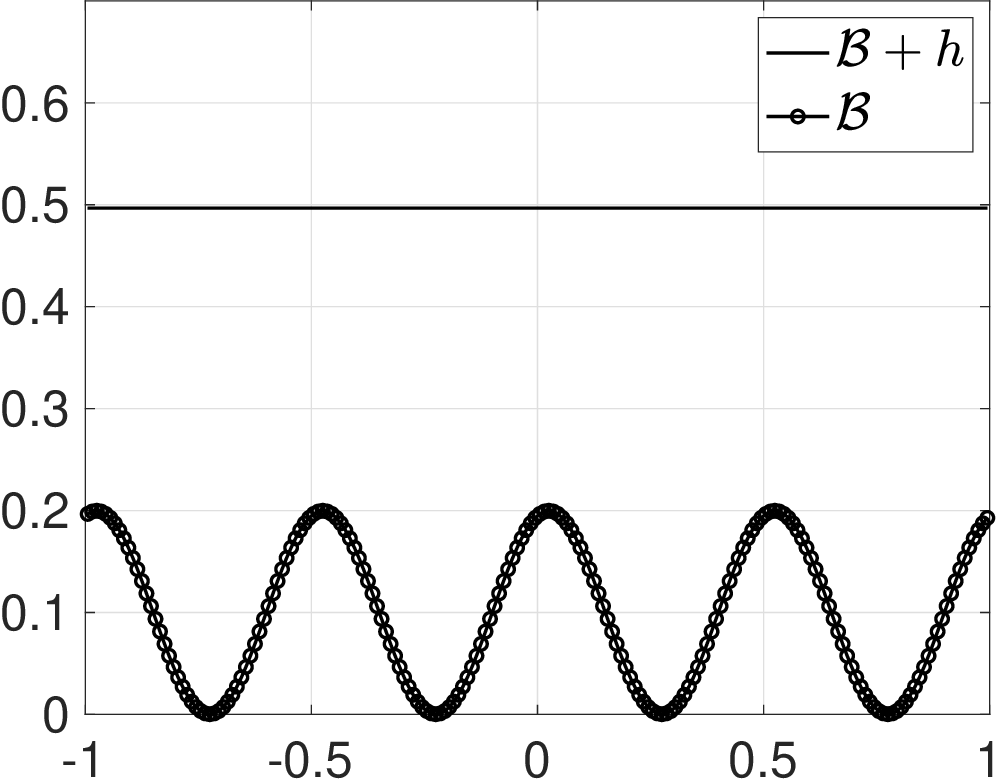}}\\
{\includegraphics[width = 0.21 \textwidth]{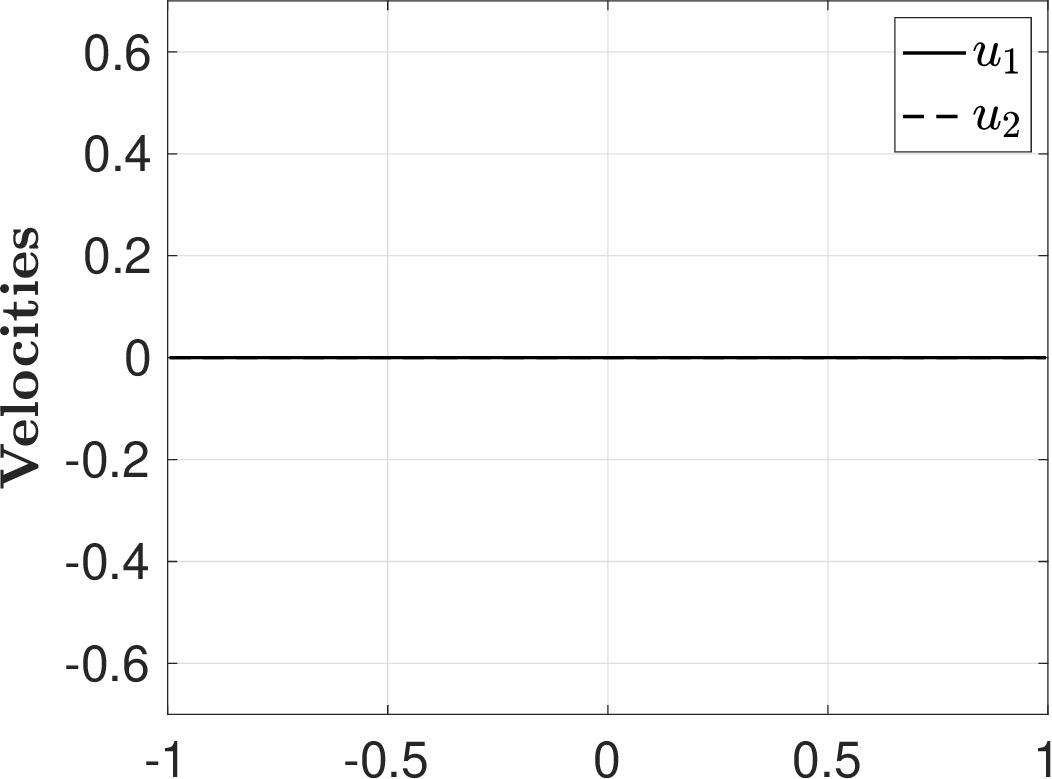}}
{\includegraphics[width = 0.2 \textwidth]{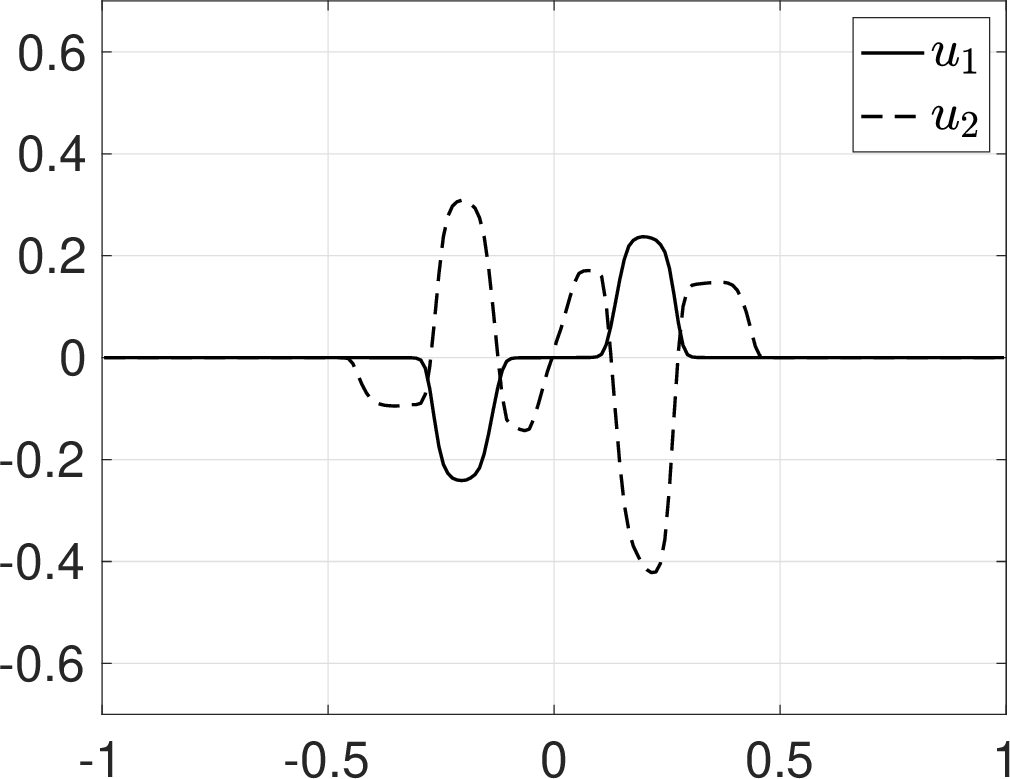}}
{\includegraphics[width = 0.2 \textwidth]{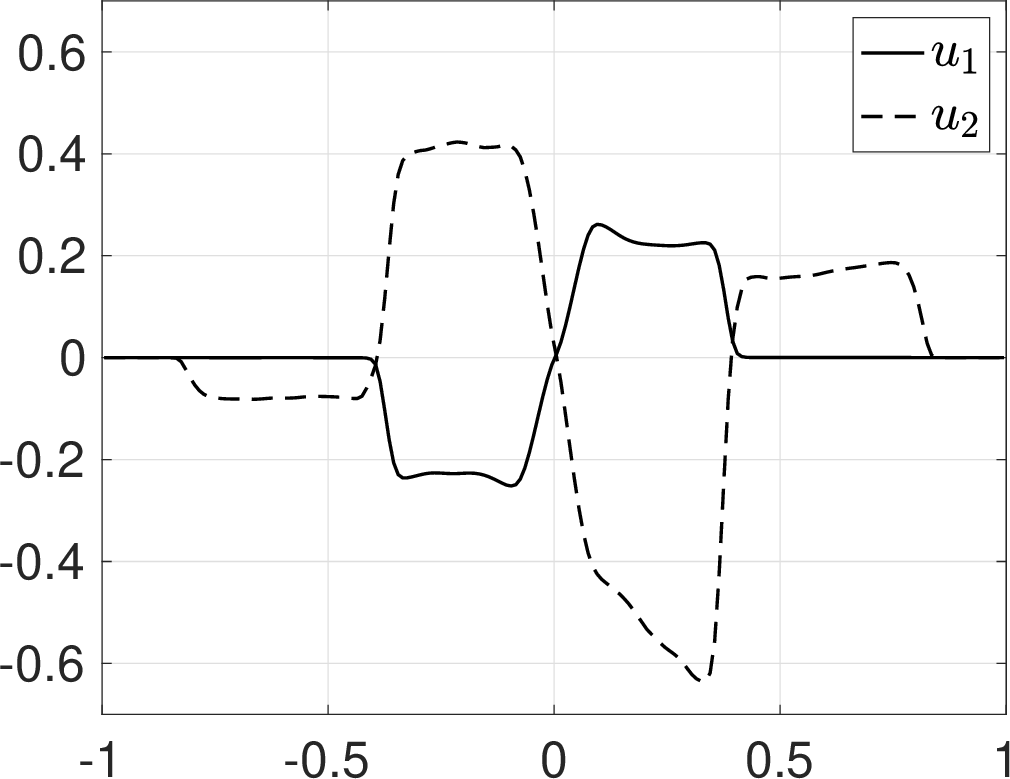}}
{\includegraphics[width = 0.2 \textwidth]{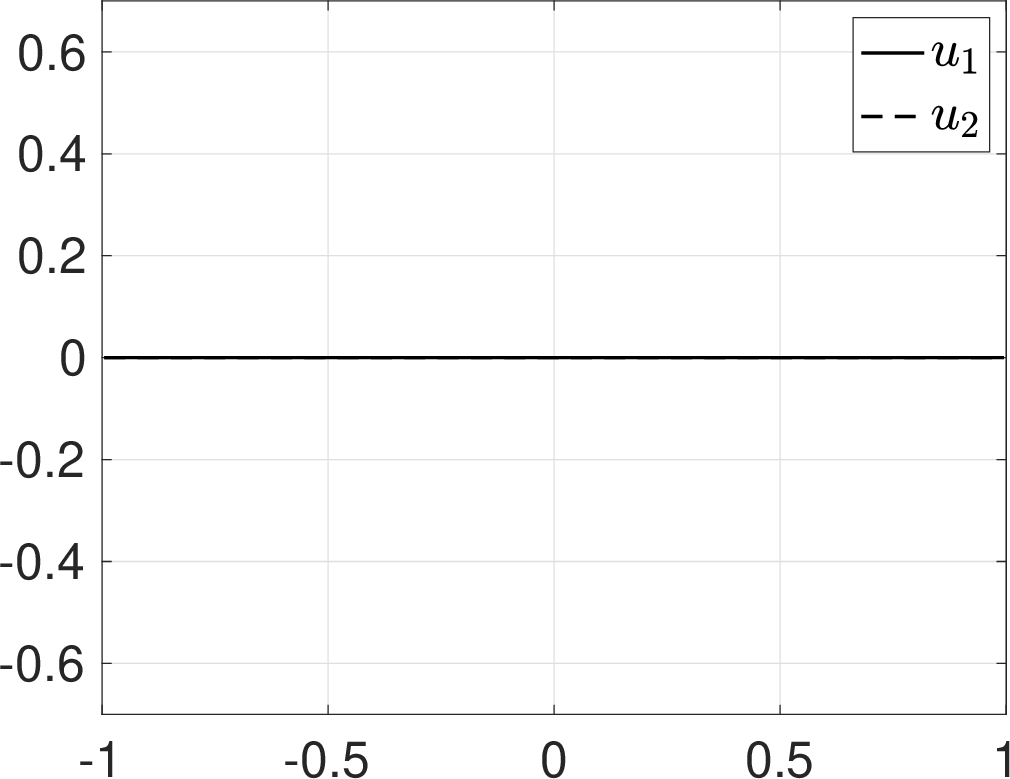}}\\
{\includegraphics[width = 0.21 \textwidth]{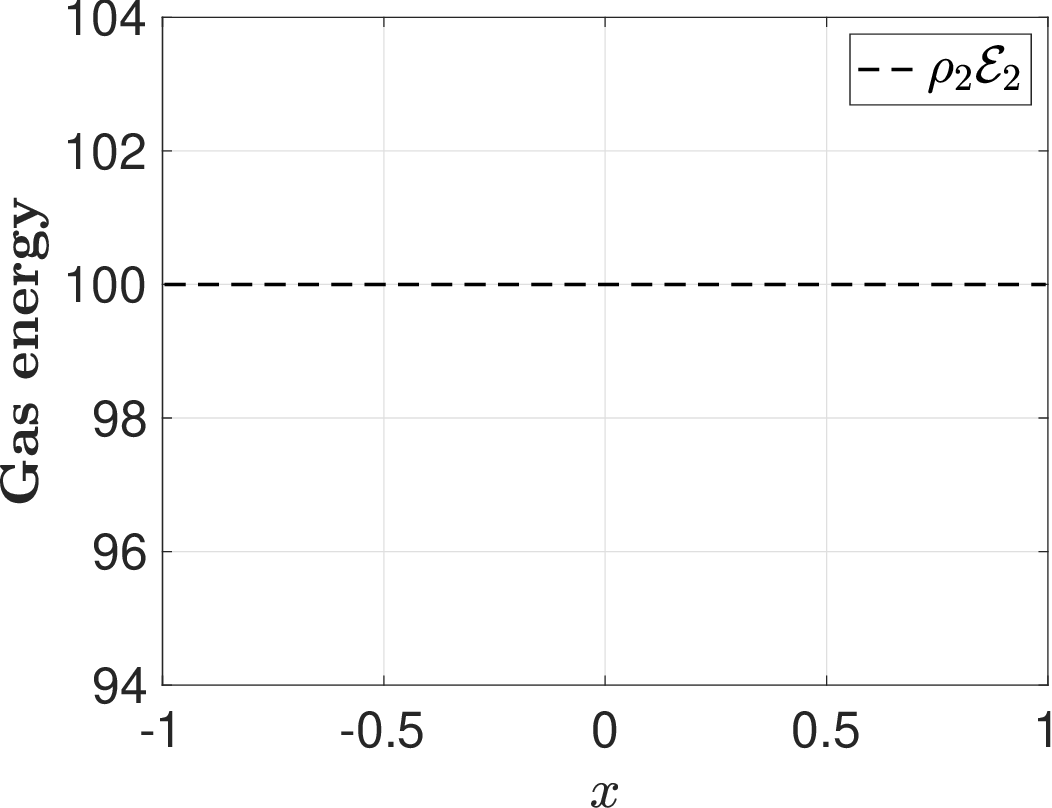}}
{\includegraphics[width = 0.2 \textwidth]{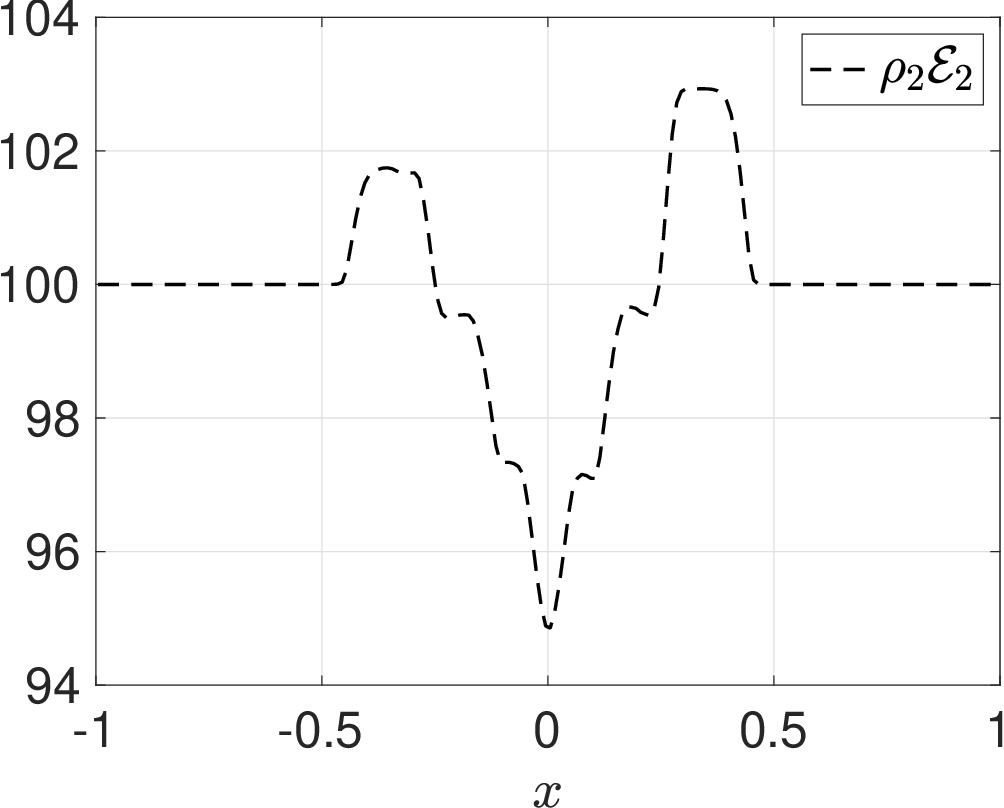}}
{\includegraphics[width = 0.2 \textwidth]{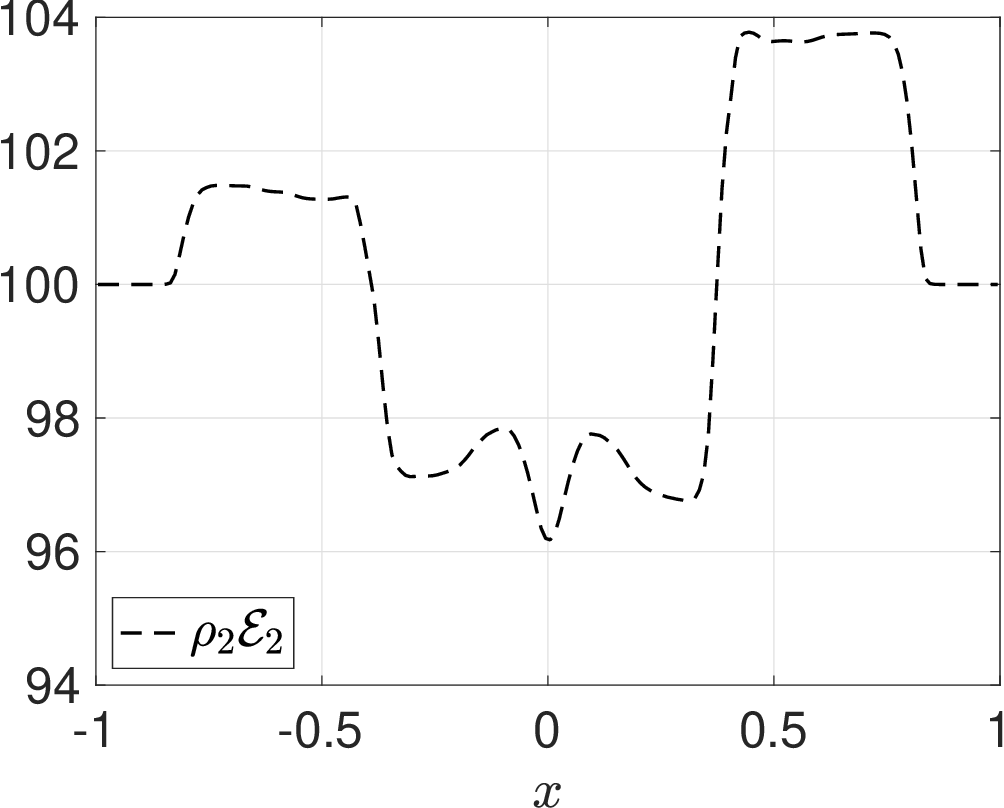}}
{\includegraphics[width = 0.2 \textwidth]{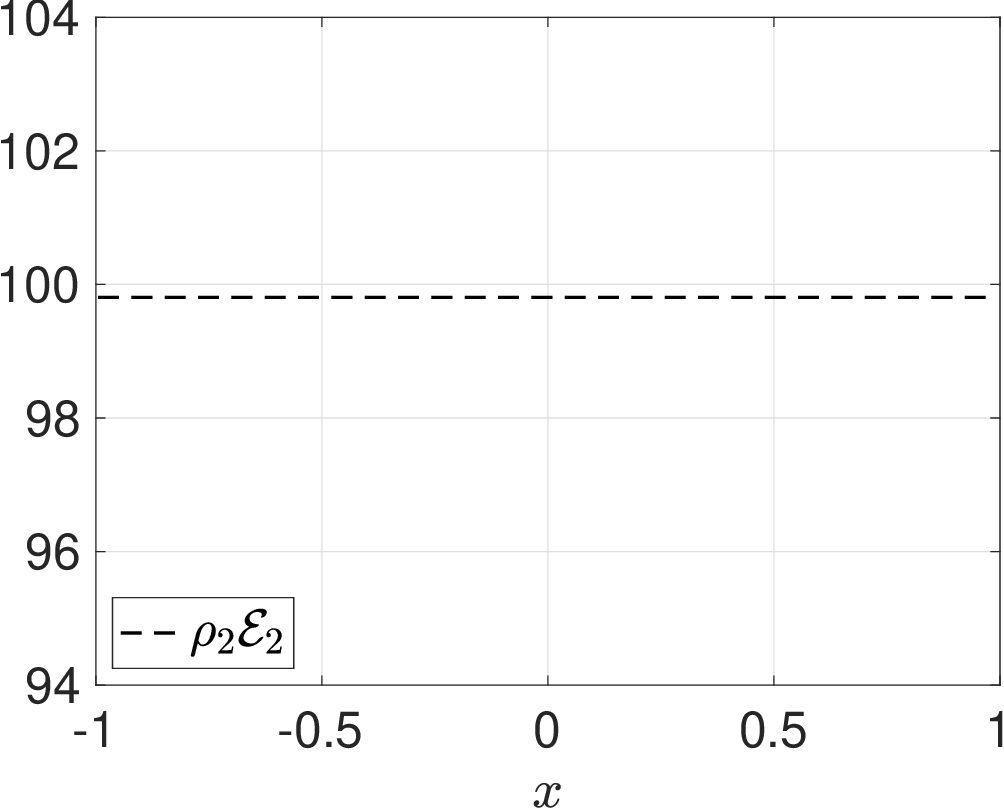}}\\
\caption{Top panels: 3D view of the duct's geometry and interface at times $t = 0.01$ (left) and $t= 5$ (right), using $200$ grid points. The rest of the panels are organized as follows. From top to bottom: topography and interface,  velocities, and gas energy density. From left to right: solutions at times $t=0,0.03,0.08,50$. The geometry, topography, and initial data are defined by equations \eqref{cylinder_shrinks}, \eqref{Smoothbottom}, and \eqref{eq:IC_pgas} respectively.}
\label{fig:WBtoSS}
\end{figure}

Here we consider an air–water configuration, for which the density difference between the two layers is large. We expect that perturbations in the liquid layer have a significant impact on the dynamics of the gas in the upper layer. The purpose of this numerical test is to assess and confirm this behavior. The duct's geometry in this case is given by 
\begin{eqnarray} \label{cylinder_shrinks}
\sigma(x,z) = 2 \sqrt{\max\left(0,\ r_{0}^2 - (r_{0} - z)^2\right)}, \quad \text{with } r_{0} = 0.55 \left( 1-\frac{1+x}{5} \right),
\end{eqnarray}
and it corresponds to a duct with circular cross sections and a cross-sectional area that shrinks along the axial direction. The continuous bottom topography is given by
\begin{eqnarray} \label{Smoothbottom}
\mathcal{B}(x) = \max\left(\frac{1}{10}\left(1 + \sin\left(\frac{\pi (x - 0.4)}{0.25}\right)\right),\ 0\right).
\end{eqnarray}
Using a perturbation in the depth of water layer, the initial conditions are given by
\begin{eqnarray}
\label{eq:IC_pgas}
\rho_1 =1000 ,\,\,\, u_1(x,0) = u_2(x,0) = 0,\,\,\, \rho_2(x,0) = 1, \nonumber\\
\mathcal{E}_2(x,0)  = 100,\,\,\,
\mathcal{B} + h(x,0) = 
\begin{cases}
0.5 + 10^{-1} \hspace{0.3cm}\hspace{.4cm} \mbox{if}\,\, x \in [-0.2,0.2], \\
0.5  \hspace{2.3cm} \text{otherwise.}  
\end{cases}
\end{eqnarray}

The numerical results are shown in Figure \ref{fig:WBtoSS}, using 200 grid points. The top panels show a 3D view of the duct. The top left panel shows the solution at an early time $t = 0.01$, indicating a perturbation in the interface. We expect that this perturbation propagates and leaves the domain for large $t$.  The top right panel shows the solution (3D view) at time $t=50$, confirming that the interface is flat again. The rest of the panels show details about the solution profiles at different times, as the perturbation travels throughout the domain. From that set, the top panels show the profiles of the topography and the interface, followed by the velocities ($u_1, u_2$) and gas energy ($\rho_2 \mathcal E_2$). From left to right, those profiles are shown at times $t= 0,0.03,0.08,50$. At the final time $t=50$, all the variables are flat again, showing convergence. This is consistent with the fact that steady states of rest in our model satisfy $\mathcal B+h = \text{constant}, u_1 = u_2 = 0, \rho_2 \mathcal E_2 = \text{constant}$, according to equation \eqref{eq:SteadyStatesRest}.

%\newpage

\subsection{Perturbation in the gas layer over a discontinuous bottom topography} 

The initial conditions in this numerical test consist of a flat interface with zero velocities, which in classical shallow water systems correspond to steady states of rest. Only the gas density and gas energy density are perturbed. As it occurs in air-water interactions, the density ratio $\rho_2/\rho_1$ is very small here. As a result, the gas fluctuations are expected to generate very weak momentum and energy exchanges with the liquid in the lower layer. We will verify that in this test.

\begin{figure}
\centering
\begin{subfigure}{0.41\textwidth}
    {\includegraphics[width=.99\textwidth]{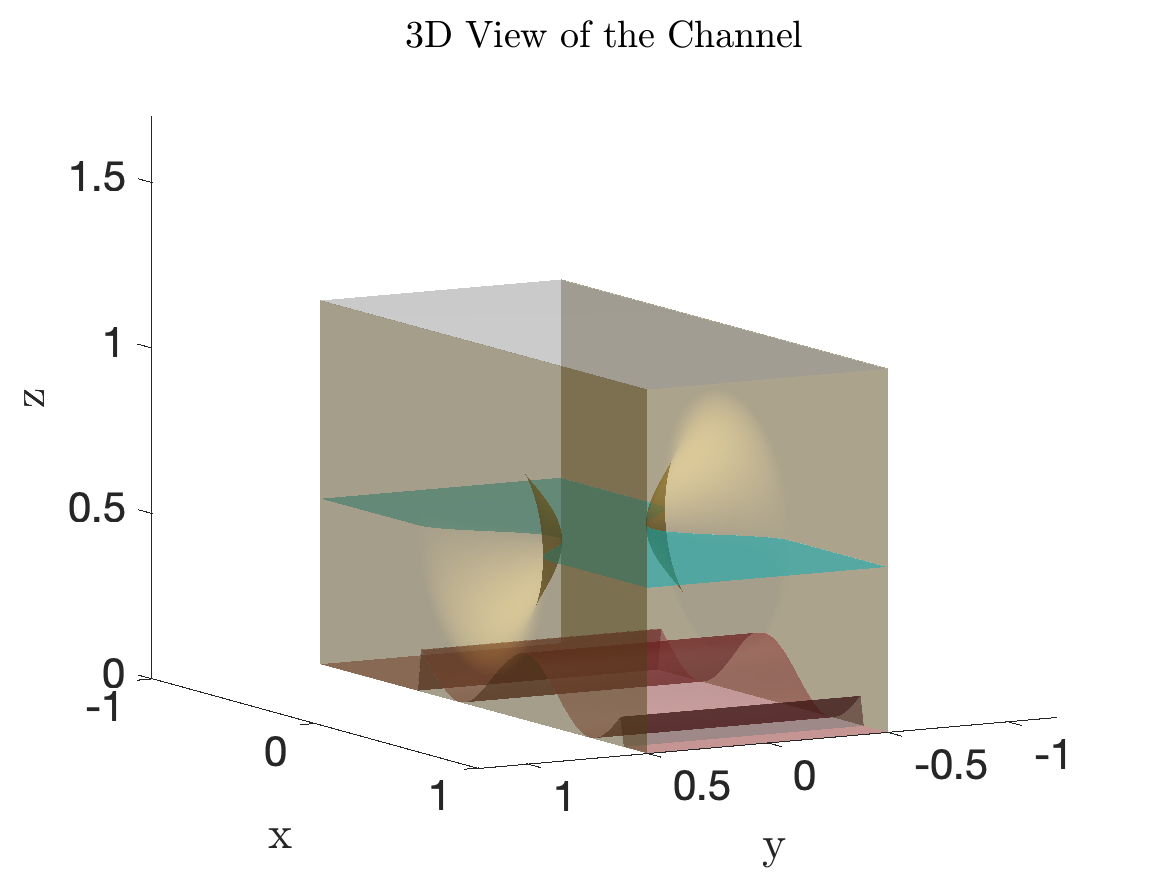}}
    \caption{3D view.}
\end{subfigure}
\begin{subfigure}{0.55\textwidth}
	{\includegraphics[width = 0.32 \textwidth]{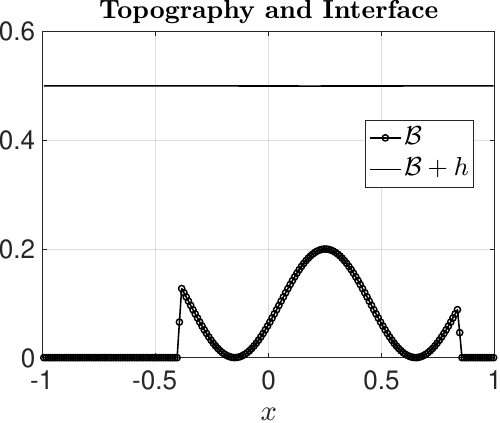}}
	{\includegraphics[width = 0.335\textwidth]{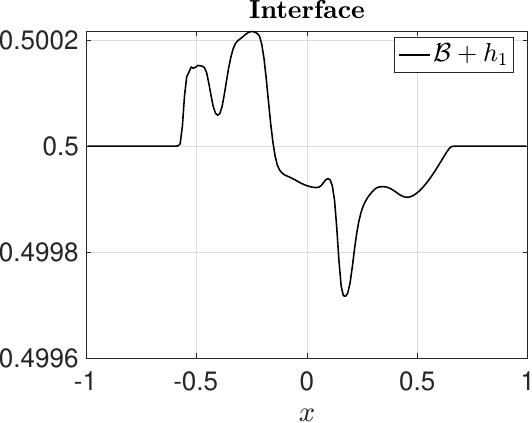}}
	{\includegraphics[width = 0.32\textwidth]{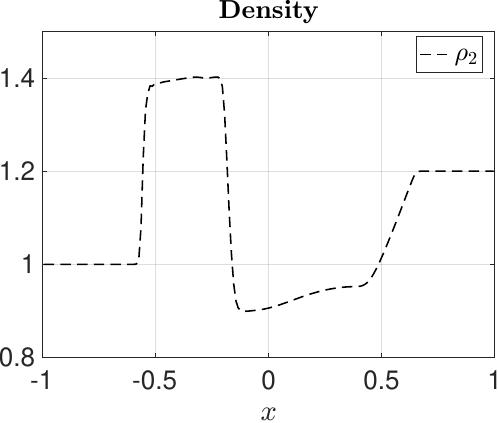}}\\
	{\includegraphics[width = 0.32\textwidth]{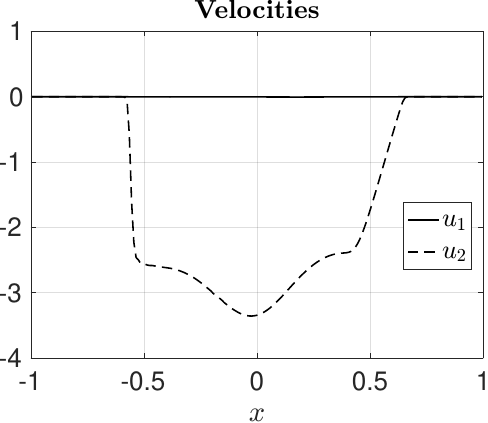}}
	{\includegraphics[width = 0.32\textwidth]{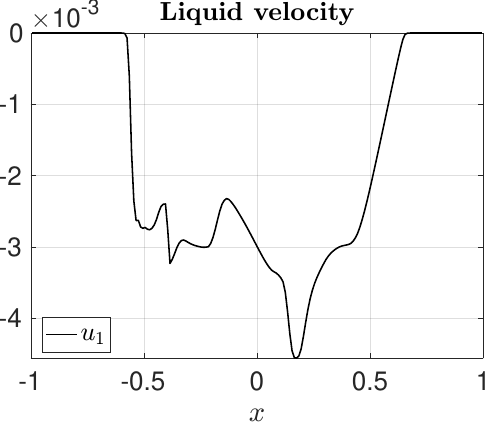}}
	{\includegraphics[width = 0.32 \textwidth]{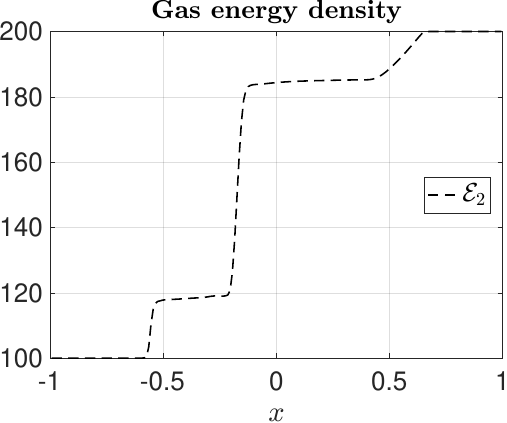}}
    \caption{Profiles.}
\end{subfigure}        
\caption{(a) Three dimensional view of the topography with geometry given by \eqref{charbottom} and \eqref{bumpchannel}. The interface corresponds to the solution at $t= 0.06$ with initial conditions given by \eqref{eq:IC_pgas1}. Section (b): Profiles using $200$ grid points. Top panels: topography and interface (left), interface (middle), and gas density (right). Bottom panels: velocities (left), liquid velocity (middle), and gas energy density (right).}
\label{fig:purt_gas_cyl}
\end{figure}

The discontinuous topography in this numerical test is given by
\begin{eqnarray} \label{charbottom}
\mathcal{B}(x)= 0.1 \left(1 + \cos\left(\frac{\pi(x - 0.25)}{0.4}\right)\right) \cdot \chi_{[-0.4,\ 0.85]}(x).
\end{eqnarray}
The channel's geometry is defined by
\begin{eqnarray} \label{bumpchannel}
\sigma(x,z) =
\begin{cases}
1 - \dfrac{0.6}{2} \left(1 + \cos\left( \dfrac{\pi r(x,z)}{0.4} \right) \right), & \text{if } r(x,z) \leq 0.4, \\
1, & \text{otherwise},
\end{cases}
\label{geometry4}
\end{eqnarray} where
\[
r(x,z) = \sqrt{(x - 0)^2 + (z - 0.5)^2}.
\]
This width corresponds to a channel with vertical walls and a centered contraction. We also consider a perturbation in the gas density and gas energy density as:
\begin{eqnarray}
\label{eq:IC_pgas1}
\rho_1 =1000 ,\,\,\, u_1(x,0) = u_2(x,0) = 0, \,\,\, \mathcal{B}(x)+h(x,0) = 0.5, \nonumber\\
\rho_2(x,0) = \begin{cases}
1 \hspace{.1cm} \text{ if }   -1 \le x < 0, \\
1.2  \hspace{0.1cm} \text{ if }   0 \le x \le 1.
\end{cases},\, \mathcal{E}_2(x,0) = \begin{cases}
100 \hspace{.1cm} \text{ if }   -1 \le x < 0, \\
200  \hspace{.1cm} \text{ if }   0 \le  x \le 1.
\end{cases}
\end{eqnarray}

The numerical approximations at $t=0.06$ are shown in Figure \ref{fig:purt_gas_cyl}. The left panel (a) shows a 3D view of the duct. The discontinuous topography is clearly identified in brown color and no variations are visible in the interface. This is because the initial conditions in equation \eqref{eq:IC_pgas1} consist of variations in the gas layer, which does not have a significant influence on the liquid layer due to the small gas density compared to the liquid counterpart. Under this pronounced density difference, the momentum and energy exchanges between layers due to gas variations are negligible. This can be corroborated in section (b), panels top left and top  middle. The variations in the interface are  less than 0.03 \%. The gas density (top right) and gas energy density (bottom right) show a similar structure to typical solutions of Riemann problems, influenced in addition by the presence of non-uniform topography and duct's geometry. The liquid and gas velocities are shown in the bottom left panel. As one can see, the liquid velocity is much smaller compared to the gas counterpart. In the bottom middle panel one can observe that the liquid velocity variations are less than $4.5 \times 10^{-3}$. This is consistent with the weak influence of the gas perturbations over the liquid layer when the density ratio $\rho_2/\rho_1$ is small.

%\newpage

\subsection{Perturbation in gas layer with liquid and gas hydrogen} 

\begin{figure}
\centering
\begin{subfigure}{0.49\textwidth}
    {\includegraphics[width=.99\textwidth]{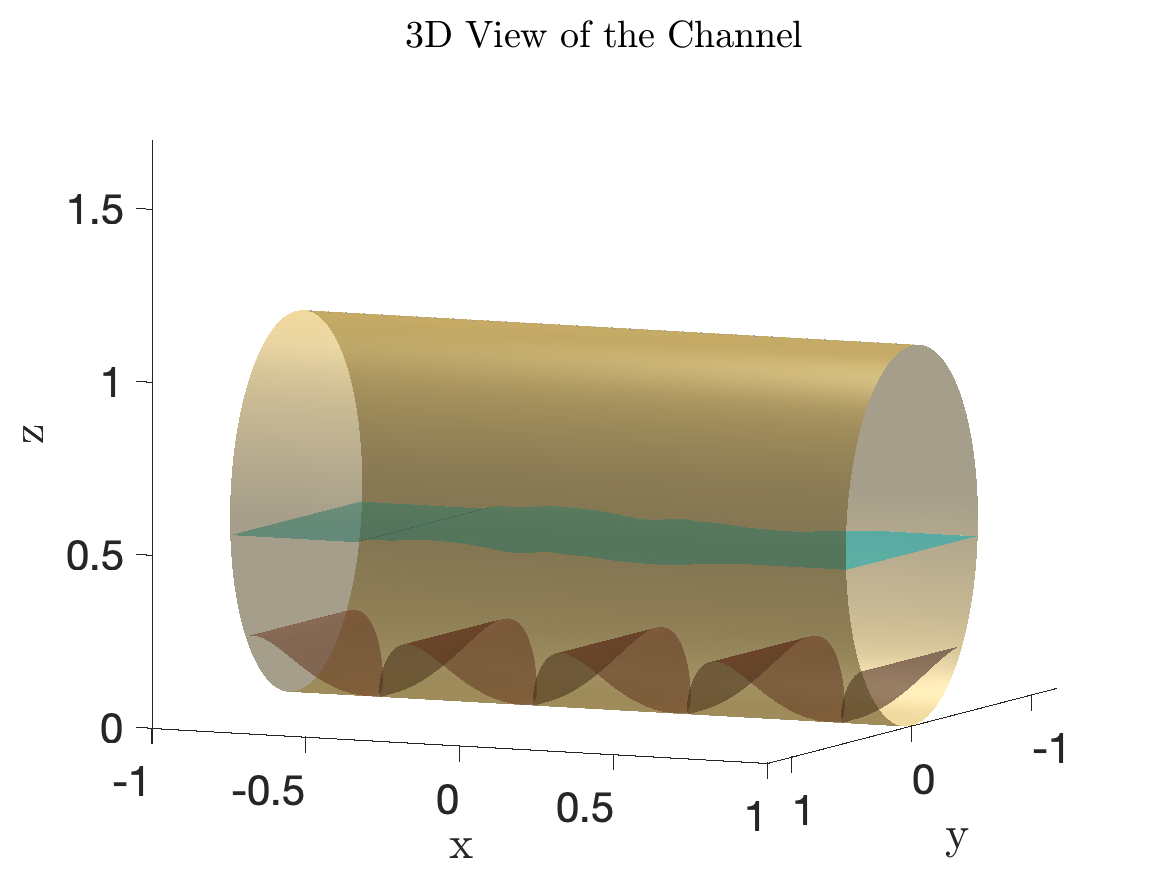}}
    \caption{3D view.}
\end{subfigure}
\begin{subfigure}{0.45\textwidth}
	{\includegraphics[width = 0.47 \textwidth]{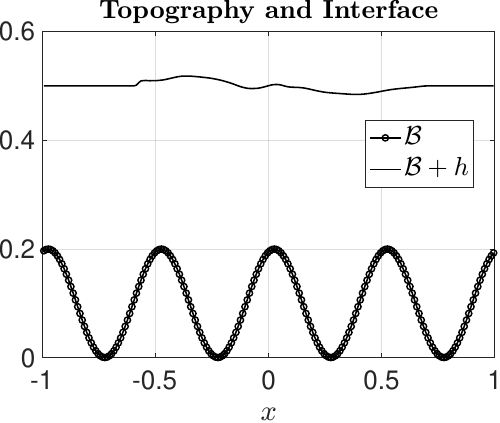}}
	{\includegraphics[width = 0.47\textwidth]{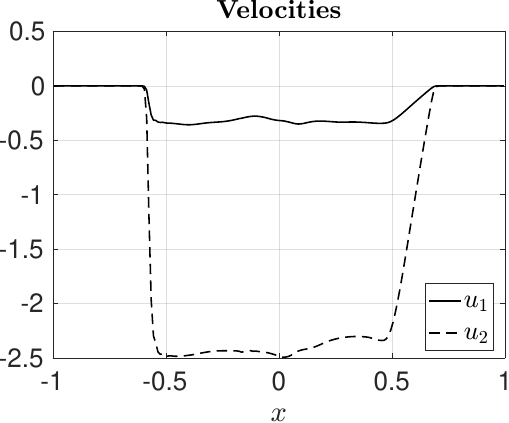}}\\
	{\includegraphics[width = 0.47 \textwidth]{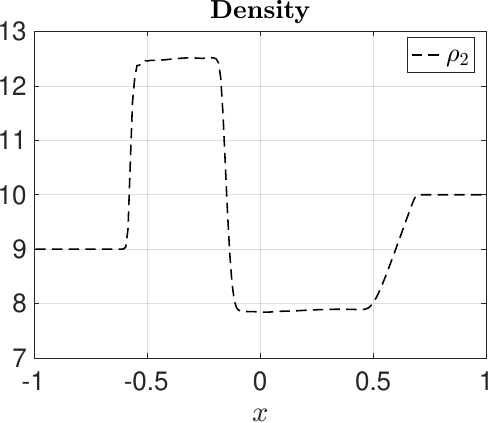}}
	{\includegraphics[width = 0.48 \textwidth]{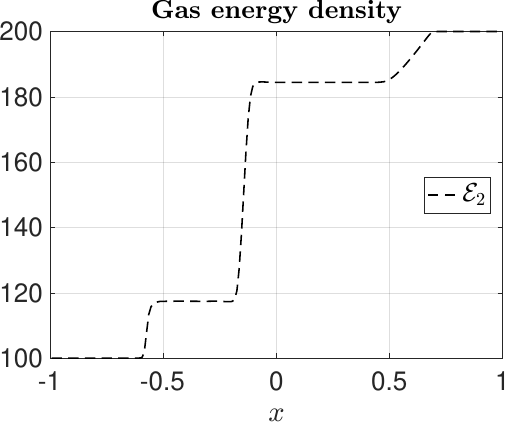}}
    \caption{Profiles.}
\end{subfigure}        
\caption{Panel (a): Three dimensional view of the duct's geometry and topography given by \eqref{Smoothbottom} and \eqref{cylinder}. Sección (b): Profiles at time $t= 0.06$ (using $200$ grid points) described the initial conditions given by equation \eqref{eq:IC_liqhydr}. Top left: Topography and interface. Top right: velocities. Bottom left: Gas density. Bottom right: Gas energy. } 
\label{fig:3DlHgH}
\end{figure}

In this numerical test, the density difference between the gas and liquid layers is not as significan as in previous numerical tests. This may occur, for instance, in interactions between gaseous and liquid hydrogen. Experiments involving those two phases of hydrogen have shown to be relevant, for instance, in the analysis of cryogenic boiling. See \cite{yuan2008cryogenic} for more details on this topic. In terrestrial conditions, the two phases become stratified: liquid stays along the bottom wall (heavier fluid) while vapor occupies the upper portion of the tube (lighter fluid).

Although we do not consider temperature fluctuations as it occurs in cryogenic boiling, we will consider here a test where the corresponding densities of the fluid are like those involving hydrogen. We will perturb the gas upper layer and observe the influence of those perturbations on the liquid bottom layer. In this numerical test, the width is given by equation \eqref{cylinder} for a cylindrical duct. The model's parameter values are  $g = 9.8$, and $\gamma_2 = 1.405$. The initial conditions given in the following equations consider a perturbation in $\mathcal{E}_2$, and $\rho_2$:
\begin{eqnarray}
\label{eq:IC_liqhydr}
\rho_1 =70 ,\,\,\, u_1(x,0) = u_2(x,0) = 0, \, \mathcal B(x)+ h(x,0) = 0.5,\nonumber\\
\rho_2(x,0) = \begin{cases}
9 \hspace{.5 cm} \text{ if }   x < 0 \\
10  \hspace{0.5cm} \text{ if }   x \ge 0
\end{cases},\, \mathcal{E}_2(x,0) =\mathcal{E}_2^{(0)} = \begin{cases}
100 \hspace{.2 cm} \text{ if }   x < 0 \\
200  \hspace{.2 cm} \text{ if }   x \ge 0
\end{cases}.
\end{eqnarray}

Figure \ref{fig:3DlHgH} shows the numerical results. The left panel (a) shows the 3D view of the channel at time $t=0.06$. Even in the 3D view, perturbations are visible in the interface. This can be corroborated in the profiles displayed in section (b) of the figure. The top left panel shows the interface and topography while the top right one displays both gas and liquid velocities. The gas density is shown in the bottom left panel and the gas energy density can be seen in the bottom right panel. The gas velocity is larger in magnitude when compared to the liquid velocity. However, the liquid velocity is non negligible anymore, causing also disturbances in the interface. In summary, this numerical test shows that the influence of gas variations can cause significant liquid variations when the density difference is not so pronounced. \\

\noindent
{\bf Conclusions.}
In this work, we derived and analyzed a model for two-layer immiscible gas–liquid flows in pipes with general cross sections. The resulting system of balance laws is conditionally hyperbolic and consists of two subsystems. The subsystem associated with the lower layer resembles the shallow water approximation for an incompressible fluid with hydrostatic pressure, while the upper layer is governed by an ideal gas law. Both layers are coupled through non-conservative products that represent momentum and energy exchanges between layers. We also describe important properties of the model, including the existence of an entropy pair.

The model accurately captures different flow regimes. For instance, when the density difference between layers is large (e.g., water and air), perturbations in the gas layer do not lead to significant momentum and energy exchanges between layers, and the coupling is mainly one-way: the liquid layer has a strong influence on the gas dynamics, whereas the influence of the gas on the liquid is negligible. In contrast, when the density difference is not too large, momentum and energy exchanges between layers become significant in both directions. The numerical results presented confirm this behavior, along with other key properties of the model.

%\newpage

\section*{Acknowledgement}
S. Shah was partially supported by the Office of Naval Research (ONR) under Award NO: N00014-24-1-2147, and 
the Air Force Office of Scientific Research (AFOSR) under Award NO: FA9550-25-1-0231. G. Hernández-Dueñas was supported in part by grant DGAPA-PAPIIT-UNAM IN115925. %We would like to thank the very useful comments and conversations with Smadar Karni. 

\bibliography{References}

\end{document}